\documentclass[journal]{IEEEtran}
\usepackage{cite}
\usepackage{amsmath,amssymb,amsfonts,amsthm}
\usepackage{algorithm}
\usepackage{algorithmic}
\usepackage{graphicx}
\usepackage{textcomp}
\usepackage{xcolor}
\usepackage{tabularx}
\usepackage{subfig}
\usepackage[margin=0.2cm]{caption}
\usepackage{comment}
\usepackage{booktabs}
\def\BibTeX{{\rm B\kern-.05em{\sc i\kern-.025em b}\kern-.08em
    T\kern-.1667em\lower.7ex\hbox{E}\kern-.125emX}}

\usepackage{bm}
\newtheorem{theorem}{Theorem}

\newtheorem{assumption}{Assumption}

\usepackage{amsmath}

\let\originalleft\left
\let\originalright\right
\renewcommand{\left}{\mathopen{}\mathclose\bgroup\originalleft}
\renewcommand{\right}{\aftergroup\egroup\originalright}

\begin{document}

\title{PreHO: Predictive Handover for LEO Satellite Networks}

\author{
  Xingqiu He,
  Zijie Ying,
  Chaoqun You,
  Yue Gao,~\IEEEmembership{Fellow,~IEEE}
\thanks{X. He, Z. Ying, C. You, and Y. Gao are with the Institute of Space Internet, Fudan University, Shanghai,
China (emails: hexqiu@gmail.com, zjying23@m.fudan.edu.cn, chaoqunyou@gmail.com, gao.yue@fudan.edu.cn).}
}

\maketitle

\begin{abstract}
  Low-Earth Orbit (LEO) Satellite Networks (LSNs) offer a promising solution for extending connectivity to areas not covered by Terrestrial Networks (TNs). 
  However, the rapid movement, broad coverage, and high communication latency of LEO satellites pose significant challenges to conventional handover mechanisms, resulting in unacceptable signaling overhead and handover latency. 
  To address these issues, this paper identifies a fundamental difference between the mobility patterns in LSNs and TNs: users are typically stationary relative to the fast-moving satellites, and channel states in LSNs are often stable and predictable. 
  This observation enables handovers to be planned in advance rather than triggered reactively. 
  Motivated by this insight, we propose PreHO, a predictive handover mechanism tailored for LSNs that proactively determines optimal handover strategies, thereby simplifying the handover process and enhancing overall efficiency. 
  To optimize the pre-planned handover decisions, we further formulate the handover planning problem and 
  develop an efficient iterative algorithm based on alternating optimization and dynamic programming. 
  Extensive evaluations driven by real-world data demonstrate that PreHO significantly outperforms traditional handover schemes in terms of signaling overhead, handover latency, and user experience.
\end{abstract}

\begin{IEEEkeywords}
  LEO satellite network, mobile communication, handover, predictive.
\end{IEEEkeywords}

\section{Introduction} \label{section:introduction}
The Internet has become an indispensable component of modern information society. 
However, due to the high deployment and operational costs associated with Terrestrial Networks (TNs), 
there remain approximately 2.6 billion ``unconnected'' users worldwide, 
primarily residing in remote and economically underdeveloped regions \cite{itu2023}. 
Given the inherent global coverage of satellites, extending mobile networks into space presents a promising solution. 
While early satellite communication systems predominantly relied on Geostationary Equatorial Orbit (GEO) satellites, 
recent developments have shifted toward Low-Earth Orbit (LEO) constellations, 
attributed to their lower production and launch costs, reduced propagation latency, and diminished path loss \cite{osoro2021techno}.

One critical challenge in LEO Satellite Networks (LSNs) arises from the high velocity of LEO satellites (approximately 7.6 km/s), 
which necessitates frequent handovers among satellites.
Existing handover mechanisms, originally designed for TNs, rely on fluctuations in received signal strength. 
As shown in Fig. 1a, User Equipments (UEs) in TNs experience notable variations in signal strength when approaching or moving away from base stations, 
prompting handover decisions when a neighboring cell's signal becomes sufficiently stronger.

\begin{figure}[t]
\centering
\includegraphics[width=0.45\textwidth]{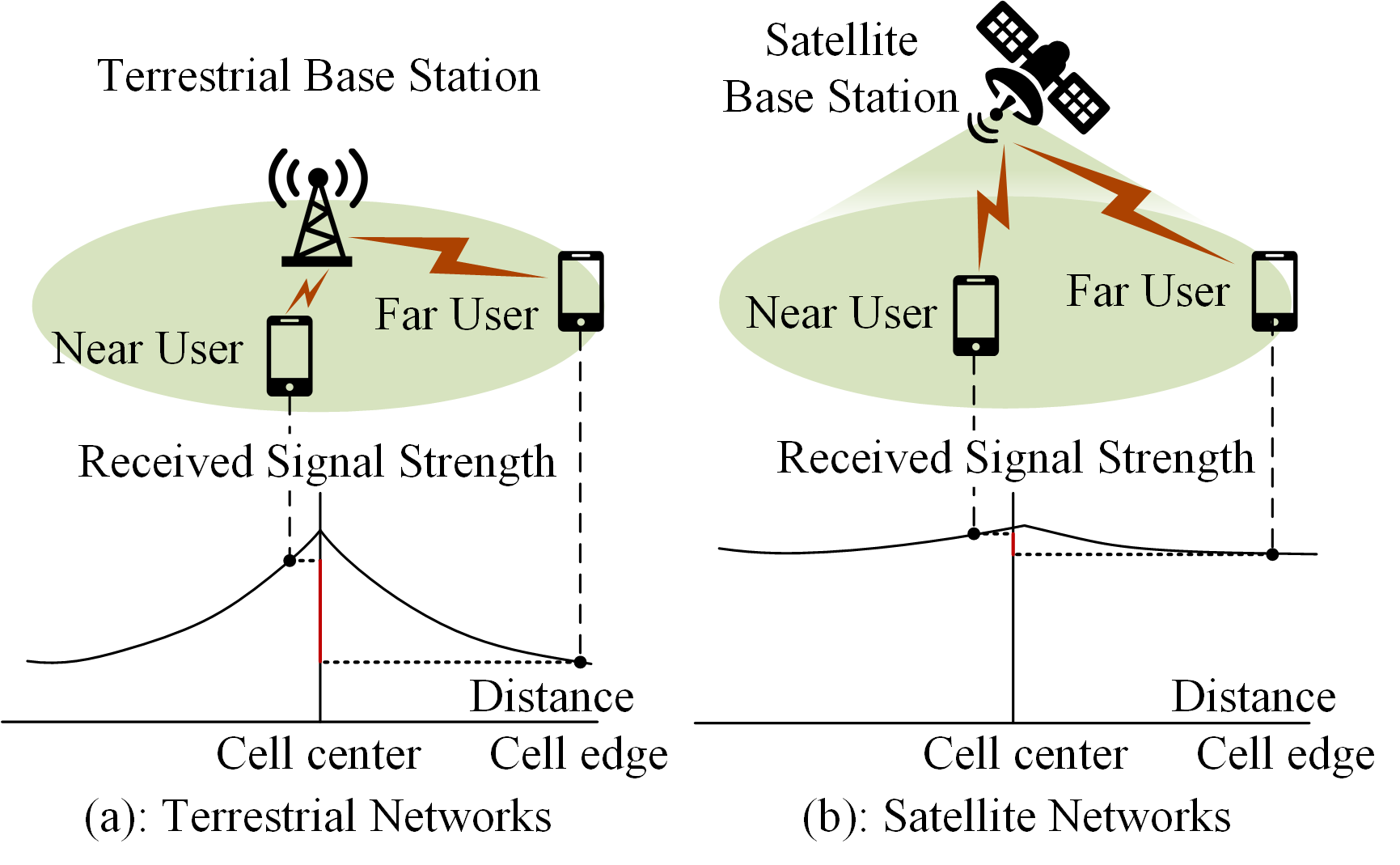}
\caption{Variation of signal strength in TNs and LSNs.}
\label{fig:signal_strength}
\end{figure}

In contrast, LSNs exhibit distinct characteristics that undermine the effectiveness of signal strength-based handovers. 
As illustrated in Fig. 1b, the high altitude of LEO satellites leads to minimal signal strength variation across large areas, 
making it difficult to derive appropriate handover decisions \cite{3gpp.38.821}.
Additionally, this mechanism also imposes other significant limitations in LSNs, 
such as excessive signaling overhead and prolonged handover latency, which will be analyzed in detail in Section \ref{subsection:problems}.

The root cause of the aforementioned issues lies in the fundamentally different mobility patterns of TNs and LSNs. 
In TNs, the random movement of users makes handovers unpredictable, which limits handover efficiency in various aspects.
For example, since it is unknown when and to which base station a user will handover, 
the downlink path switching at the Core Network (CN) cannot be preemptively prepared, hence leading to prolonged handover latency.
Additionally, the target base station can only be determined after the initiation of handover procedure.
Informing both the user and the CN of the relevant information about the target base station requires
multiple signaling messages, which results in significant signaling overhead when the handover frequency is high.
While such inefficiencies are tolerable in TNs, they become unacceptable in LSNs due to the rapid movement, 
broad coverage, and high communication latency of LEO satellites.

Conversely, in LSNs, users are typically considered relatively stationary compared to the rapid movement of satellites \cite{wang2022seamless}. 
Moreover, the users of LSNs are usually located in sparsely populated regions with favorable line-of-sight communication conditions and minimal interference, 
hence their channel states are often stable and predictable \cite{9439942, 10550141, 9990587}. 
Consequently, the randomness associated with users' locations and link quality is considerably mitigated, rendering handovers largely predictable.

Building on this insight, we propose PreHO, a predictive handover mechanism designed for LSNs. 
To facilitate handover planning, we introduce a novel network function, the Handover Planning Function (HPF), 
to collect related data (e.g., user locations) and optimizes handover decisions with global information to enhance user experience. 
These decisions are then disseminated to UEs, satellites, and the CN, 
enabling autonomous execution without further signaling interactions.
Furthermore, PreHO maintains compatibility with conventional handover mechanisms and can seamlessly revert to 
signal strength-based approaches when the predicted information is inaccurate.

Our main contributions are summarized as follows:
\begin{itemize}
  \item 
    We propose PreHO, a predictive handover mechanism tailored for LSNs.
    By leveraging the predictability of handovers, PreHO fundamentally addresses the efficiency bottlenecks inherent 
    in existing handover mechanisms, thereby significantly enhancing the performance of LSNs.
  \item 
    To optimize handover decisions at the HPF, we formulate the handover planning problem as a mixed-integer programming problem, 
    which is proven to be NP-hard. 
    By employing the alternating optimization and dynamic programming, we develop an efficient approximate algorithm that is guaranteed to converge to a local optimum.
  \item 
    We build a prototype of PreHO and evaluate it with real-world satellite data. 
    Experimental results demonstrate that PreHO significantly outperforms existing handover mechanisms
    in terms of handover latency, signaling overhead, and user experience.
\end{itemize}

The rest of the paper is organized as follows. 
In Section \ref{section:related_work}, we review related works.
In Section \ref{section:motivation}, we briefly introduce the preliminary and motivation of our work.
Section \ref{section:system} describes the detailed design of PreHO.
The proposed planning algorithm is demonstrated in Section \ref{section:algorithm}.
The evaluation results and conclusions are presented in Section \ref{section:simulation} and Section \ref{section:conclusion}.

\section{Related Work} \label{section:related_work}
This section reviews related work from two perspectives. 
The first focuses on recent efforts to improve handover mechanisms for LSNs, 
while the second examines the design of predictive handover mechanisms for TNs.

\subsection{Handover Mechanisms for LSNs}
Existing research on handover mechanisms in LSNs primarily falls into two categories. 
The first focuses on optimizing handover timing and target selection. 
For instance, \cite{wang2022seamless} optimizes the target satellite candidates for conditional handover (CHO) under service continuity constraints.
The work in \cite{zhang2021network} formulates handovers as a network flow problem and derive strategies by solving for minimum cost and maximum flow. 
A QoE-aware handover mechanism is proposed in \cite{xu2020qoe} that selects target satellites based on predicted service duration and available communication resources. 
The work in \cite{he2020load} employs multi-agent reinforcement learning to minimize the number of handovers while ensuring load balance across satellites.

The second category aims to refine the handover procedure to better accommodate LSNs. 
As demonstrated in \cite{juan20205g,juan2022performance}, traditional handover mechanisms often lead to service discontinuity and unnecessary handovers in LSNs.
To address this, \cite{juan2022location} proposes a location-based CHO strategy that leverages user location and satellite trajectory to guide handover decisions. 
In \cite{li2020user}, the authors introduce a buffering mechanism in which multiple satellites simultaneously cache a user's downlink data to enhance communication reliability during handovers. 
Additionally, \cite{lee2023handover} utilizes deep reinforcement learning to predict future signal strength, thereby eliminating the need for measurement reports.
The works in \cite{wu2024accelerating, wu2025phandover} reduce handover latency by decoupling access and core network interactions based on predicted handover timings.
However, none of the aforementioned studies effectively address the issue of excessive signaling overhead in LSNs.

\subsection{Predictive Handover Mechanisms for TNs}
Predictive handover mechanisms have been extensively investigated in 5G and beyond networks to improve handover efficiency.
The majority of existing approaches aim to optimize the handover timing and target cell selection by leveraging predictive models 
based on historical radio conditions \cite{masri2021machine, lima2023deep}, user trajectories and resource utilization patterns \cite{sun2024pbphs}, 
as well as user equipment measurements \cite{panitsas2024predictive}. 
Several studies focus specifically on mmWave scenarios. 
For instance, \cite{koda2018reinforcement} proposes a reinforcement learning-based framework that predicts human blockage 
using pedestrian location and velocity to derive optimal handover decisions. 
Similarly, \cite{palacios2019leap} utilizes device positions to optimize network operations 
such as access point association and antenna beam steering.
In \cite{lee2020prediction}, a novel deep learning-based CHO scheme is proposed
by predicting future signal blockages and identifying the optimal target base station based on evolving signal patterns. 
Building on this, the Enhanced CHO (ECHO) scheme in \cite{prado2021echo} incorporates user trajectory prediction to proactively prepare 
base stations (BSs) along the anticipated user path, thereby improving handover target selection. 
Unlike these methods that directly optimize handover decisions, the work in \cite{sun2025proactive} 
develops an reinforcement learning-based approach to optimize key handover control parameters, such as 
time-to-trigger and cell individual offsets.
While these approaches are effective in TNs, they are unable to address the efficiency issues 
arising from the fundamental architectural and operational disparities between TNs and LSNs.

\section{Preliminary and Motivation} \label{section:motivation}
\begin{figure}[t]
\centering
\subfloat[Terrestrial Networks]{\includegraphics[width=0.45\textwidth]{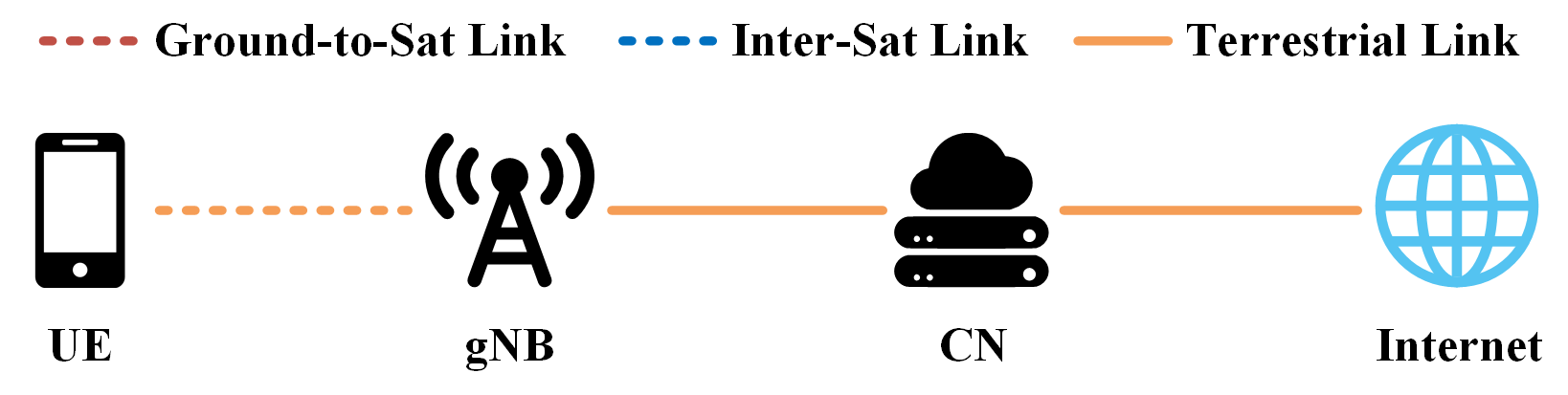} \label{fig:arch1}}
\vfil
\subfloat[Satellite Networks: Transparent Mode]{\includegraphics[width=0.45\textwidth]{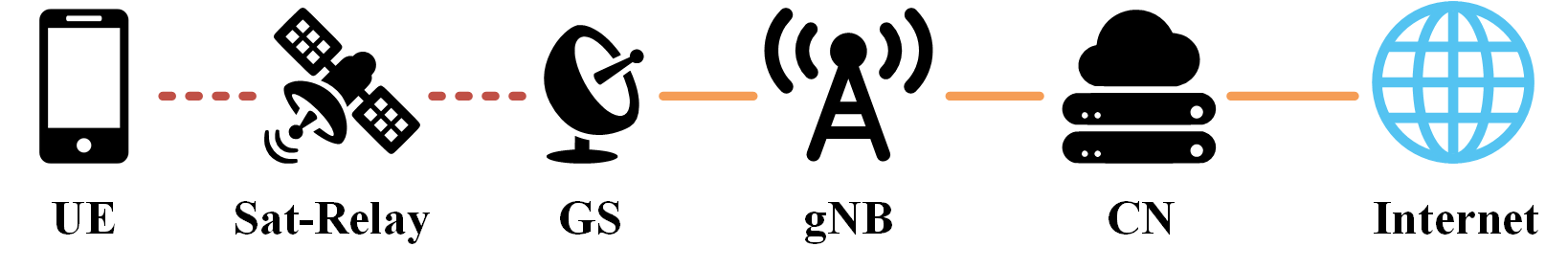} \label{fig:arch2}}
\vfil
\subfloat[Satellite Networks: Regenerative Mode]{\includegraphics[width=0.45\textwidth]{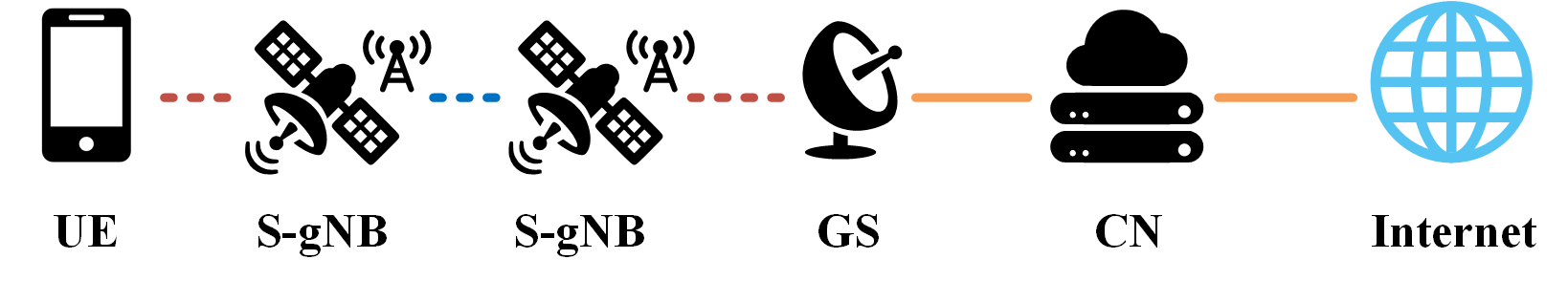} \label{fig:arch3}}
\caption{Architecture of Terrestrial and Satellite Networks.}
\label{fig:arch}
\end{figure}

In this section, we will provide a concise overview of the system architecture of current terrestrial and satellite networks.
Then we will briefly introduce existing handover mechanisms
and discuss the issues that arise when applying these handover mechanisms to LSNs.

\subsection{Architecture of Terrestrial and Satellite Networks}
As illustrated in Fig. \ref{fig:arch1}, 
Current 5G systems comprise three primary components: 
UEs, such as mobile phones and Internet of Things (IoT) devices, the Radio Access Network (RAN), and the CN. 
The principal entity in the RAN is the next-generation Node B (gNB), which serves as the 5G base station that connects UEs to the CN. 
The CN encompasses various network functions, with the two most relevant to handovers being the Access and Mobility Management Function (AMF) and the User Plane Function (UPF). 
The AMF manages user access, mobility, and security, while the UPF is responsible for forwarding user data packets and managing data paths.

In mobile satellite networks, UEs are connected to the CN via satellites. 
Currently, mobile satellite networks operate in two distinct modes \cite{3gpp.38.821}.
In the transparent mode, the satellite acts as a passive relay, simply forwarding signals between UEs and terrestrial gNBs through
Ground Stations (GSs) without altering the content.
In the regenerative mode, however, the satellite serves as a gNB and performs digital processing tasks such as error correction and signal decoding/encoding. 
This mode enhances signal quality and reduces latency, making it the preferred mode for future mobile satellite networks.
Therefore, this paper focuses on the regenerative mode and refers to these satellites as Satellite gNBs (S-gNBs).

\begin{figure}[t]
\centering
\includegraphics[width=0.45\textwidth]{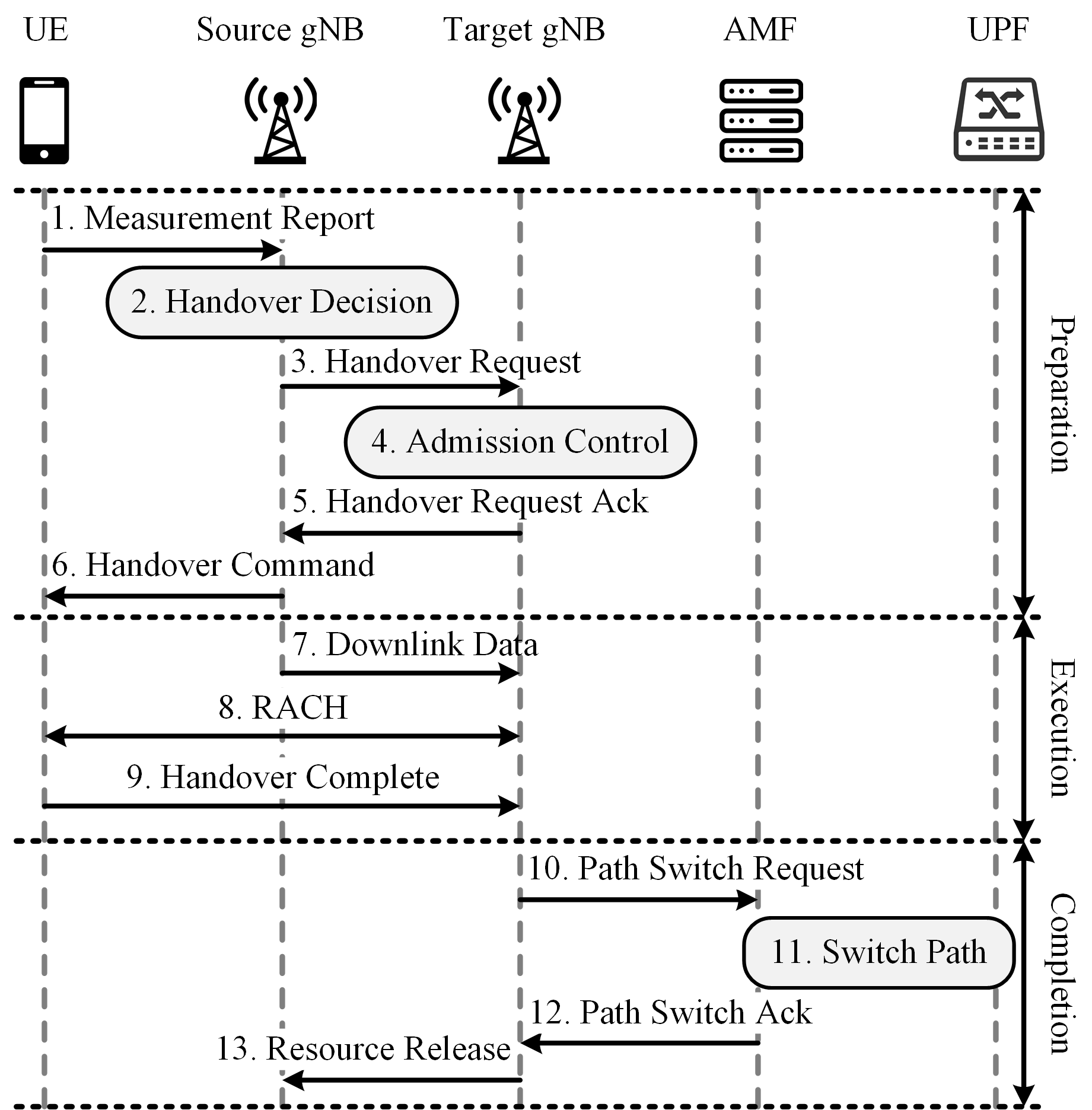}
\caption{Illustration of the BHO procedure.}
\label{fig:BHO}
\end{figure}

\subsection{Handover Mechanisms in TNs}
Fig. \ref{fig:BHO} illustrates the simplified baseline handover (BHO) procedure in 5G systems
where the source and target gNBs are assumed to be interconnected via the Xn interface \cite{3gpp.38.300}.
The process can be divided into the following three phases:
\begin{itemize}
\item \textbf{Preparation Phase:} When a UE detects that the signal from a new gNB is stronger than that of the source gNB by a specified margin, 
  the UE sends a Measurement Report (MR) to the source gNB. 
  The source gNB then selects a target gNB based on the MR and returns the related information about the target gNB to the UE.
\item \textbf{Execution Phase:} The UE disconnects from the source gNB and connects to the target gNB via the Random Access CHannel (RACH) process. 
  The UE sends a handover complete message when the connection is successfully established.
  Meanwhile, the source gNB forwards any undelivered downlink data to the target gNB.
\item \textbf{Completion Phase:} The target gNB notifies the AMF and UPF to switch the downlink path. 
  Subsequently, the target gNB instructs the source gNB to release the resources associated with the UE.
\end{itemize}
A notable issue in BHO is that the handover process is typically initiated when the radio link between the UE and the source gNB is already weakened. 
Consequently, the MR may fail to reach the network, or even if it does, 
the handover command may not successfully reach the UE, resulting in handover failure.

To mitigate this problem, 3GPP introduced conditional handover (CHO),
which decouples the preparation phase from the execution phase \cite{martikainen2018basics}.
In CHO, the UE still sends an MR to initiate the handover process, but earlier than in BHO to prevent the radio link from becoming too weak.
At this preliminary stage, the source gNB may find it challenging to identify the optimal target gNB and therefore usually selects multiple candidates. 
The related information of these candidates, along with the handover execution conditions, is returned to the UE. 
The UE maintains its connection with the source gNB and simultaneously evaluates the execution conditions for all candidates.
When one of these conditions is met, the UE autonomously switches to the corresponding target gNB without further communication with the source gNB.


\subsection{Existing Problems in LEO Satellite Networks} \label{subsection:problems}
Both BHO and CHO are designed specifically for TNs. 
When directly applied to LSNs, in addition to the signal strength issue discussed in the Introduction section, 
they also present the following significant limitations:
\begin{itemize}
  \item \textbf{Excessive Signaling Overhead:}
    The rapid movement and broad coverage of LEO satellites necessitate frequent handovers for a large number of UEs.
    Handling such a massive amount of handovers leads to unacceptable signaling overhead for the CN \cite{li2022case}.
  \item \textbf{Prolonged Handover Latency:}
    In LSNs, the communication latency between S-gNBs and the CN is significantly prolonged due to 
    the extended distance between them. 
    This situation contributes to an average handover latency of approximately $400$ ms \cite{wu2024accelerating}, 
    which detrimentally impacts user experience, particularly for latency-sensitive applications.
  \item \textbf{Suboptimal Handover Decisions:}
    In TNs, the selection of the target gNB is primarily based on signal strength. 
    However, in LSNs, additional factors, such as the remaining service time and workload of S-gNBs, significantly influence user experience. 
    Hence, it is necessary to optimize the handover decisions to account for these factors.
  \item \textbf{Unavailable Xn Interface:}
    Both BHO and CHO assume that gNBs are interconnected via the Xn interface. 
    While connecting S-gNBs using Inter-Satellite Links (ISLs) is feasible, this technology is still not fully mature. 
    Even if ISLs become viable, connecting satellites at different orbits would necessitate multiple hops and is inefficient.
    An alternative is to use N2-based handover, where gNBs communicate via the CN,
    but this approach considerably increases signaling overhead and handover latency.
\end{itemize}

As discussed in Section \ref{section:introduction}, the first two problems stem from the unpredictability of handovers in TNs
and are further exacerbated in LSNs due to the rapid movement, broad coverage,
and high communication latency of LEO satellites.
However, these issues can be largely mitigated in LSNs by leveraging the predictability of handovers, 
which significantly improves handover efficiency. 
Moreover, with adjusted system architecture and handover procedure, our proposed handover mechanism addresses the last two problems as well. 
The detailed mechanism design is presented in the next section.

\section{Mechanism Design} \label{section:system}

\begin{figure}[t]
\centering
\includegraphics[width=0.36\textwidth]{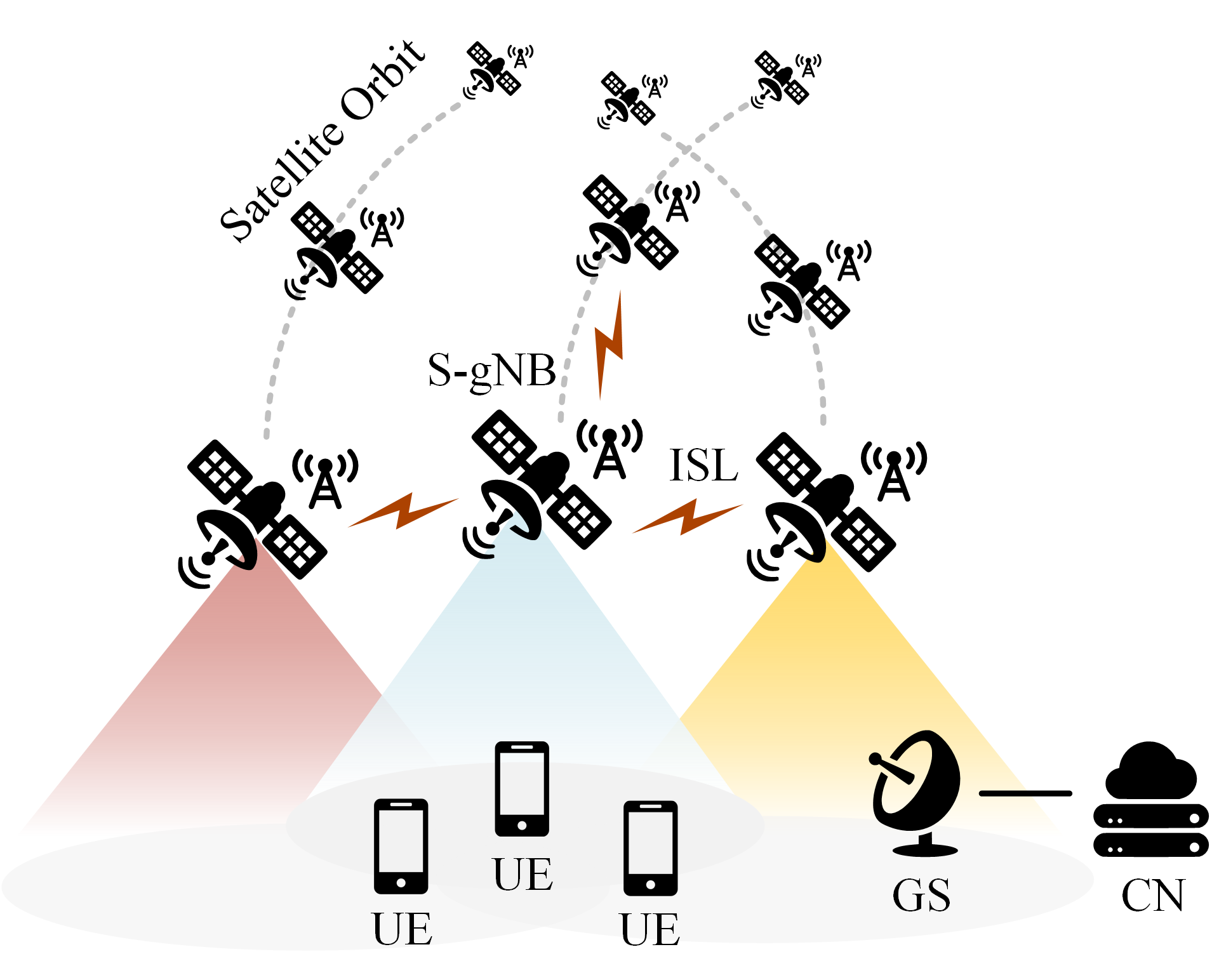}
\caption{Illustration of the considered scenario.}
\label{fig:system}
\end{figure}

In this section, we first present the proposed handover mechanism, termed PreHO, under idealized conditions where
predictive information---such as channel states and satellite trajectories---is assumed to be accurate.
Subsequently, we demonstrate that PreHO is fully compatible with existing handover mechanisms and can seamlessly revert to them 
when predictive information is inaccurate.

Before delving into the details of our handover mechanism, we briefly describe the scenario considered in this paper.
As illustrated in Fig. \ref{fig:system}, we consider a specific region on Earth where multiple UEs are served by a LSN.
UEs are considered relatively stationary compared to the high speed of LEO satellites.
We assume UEs are equipped with Global Navigation Satellite System (GNSS) capabilities so that they are aware of their own locations.

\subsection{Design Principles}
The PreHO mechanism leverages the predictability of handovers in LSNs, which offers two principal benefits.
First, it allows us to skip steps associated with the random movement of UEs, thereby substantially reducing signaling overhead and handover latency. 
Second, it enables us to optimize handover decisions in a coordinated manner, enhancing both user experience and system performance. 
Specifically, PreHO primarily differs from conventional handover procedures in the following four aspects:

\subsubsection{MR-less}
Traditional handover schemes rely on MRs to capture signal strength fluctuations induced by the 
random movement of UEs and variations in channel conditions.
However, in LSNs, the randomness of UEs' movement and channel states is largely mitigated,
hence the signal strength can typically be predicted with high accuracy given known user locations \cite{9439942, 10550141, 9990587}.
Motivated by this, PreHO eliminates the need for MRs by using predicted signal strength to determine handover timing.
Consequently, rather than sending MRs, UEs only need to periodically report their locations.

\subsubsection{Coordinated Handover Decisions}
Determining appropriate handover times and target S-gNBs in LSNs is inherently more challenging than in TNs due to the following three reasons.
First, with the ongoing development of LEO constellations, a UE is typically covered by tens or even 
hundreds of LEO satellites \cite{guo2024fast, del2019technical}, 
resulting in a substantially larger number of candidate target S-gNBs. 
Second, there are additional factors, such as the remaining service time of S-gNBs, 
need to be considered when making handover decisions. 
Third, for better system performance, handover decisions of different UEs should be globally coordinated to balance the load across S-gNBs. 
These factors make it very difficult to derive optimal handover decisions without a global view of the system.

To address these issues, we introduce a novel network function, the Handover Planning Function (HPF), 
which can be placed at the GS or other entities capable of interfacing with both S-gNBs and the CN.
The HPF collects UE locations via S-gNBs and combine this information with ephemeris data to optimize handover decisions in a coordinated way.
These decisions are then distributed to the S-gNBs, UEs, and the CN for subsequent execution. 
The algorithm for making handover decisions will be presented in Section \ref{section:algorithm}.

\subsubsection{Time-based CHO}
The handover decisions derived by the HPF specify both the target S-gNB and the exact time at which UEs should handover.
Inspired by CHO, PreHO leverages this predictive capability to reduce the signaling overhead and the probability of handover failure.
Specifically, after receiving handover decisions from the HPF, S-gNBs send the relevant information for multiple future handovers to corresponding 
UEs through a single signaling message. 
UEs then autonomously connect to the target S-gNB at the specified times, without further communication with the source S-gNBs. 
We refer to this new type of CHO as time-based CHO.

\subsubsection{RACH-less}
Conventional handover procedures require UEs to initiate a RACH procedure when connecting to the target gNB, 
primarily to obtain Timing Advance (TA) and uplink transmission grants.
The TA value corresponds to the time a signal takes to travel from a UE to the gNB and is used in uplink synchronization to avoid 
collisions and interference from different UEs. 
In LSNs, since the locations of UEs and satellites are known, the TA value can be effectively estimated \cite{wang2021location, balakrishnan2025timing}. 
Moreover, the uplink transmission grant can be transmitted to the UE in advance along with the handover decisions. 
As a result, we can remove the RACH process to reduce the handover latency.

\subsection{Procedure of PreHO}
\begin{figure}[t]
\centering
\includegraphics[width=0.48\textwidth]{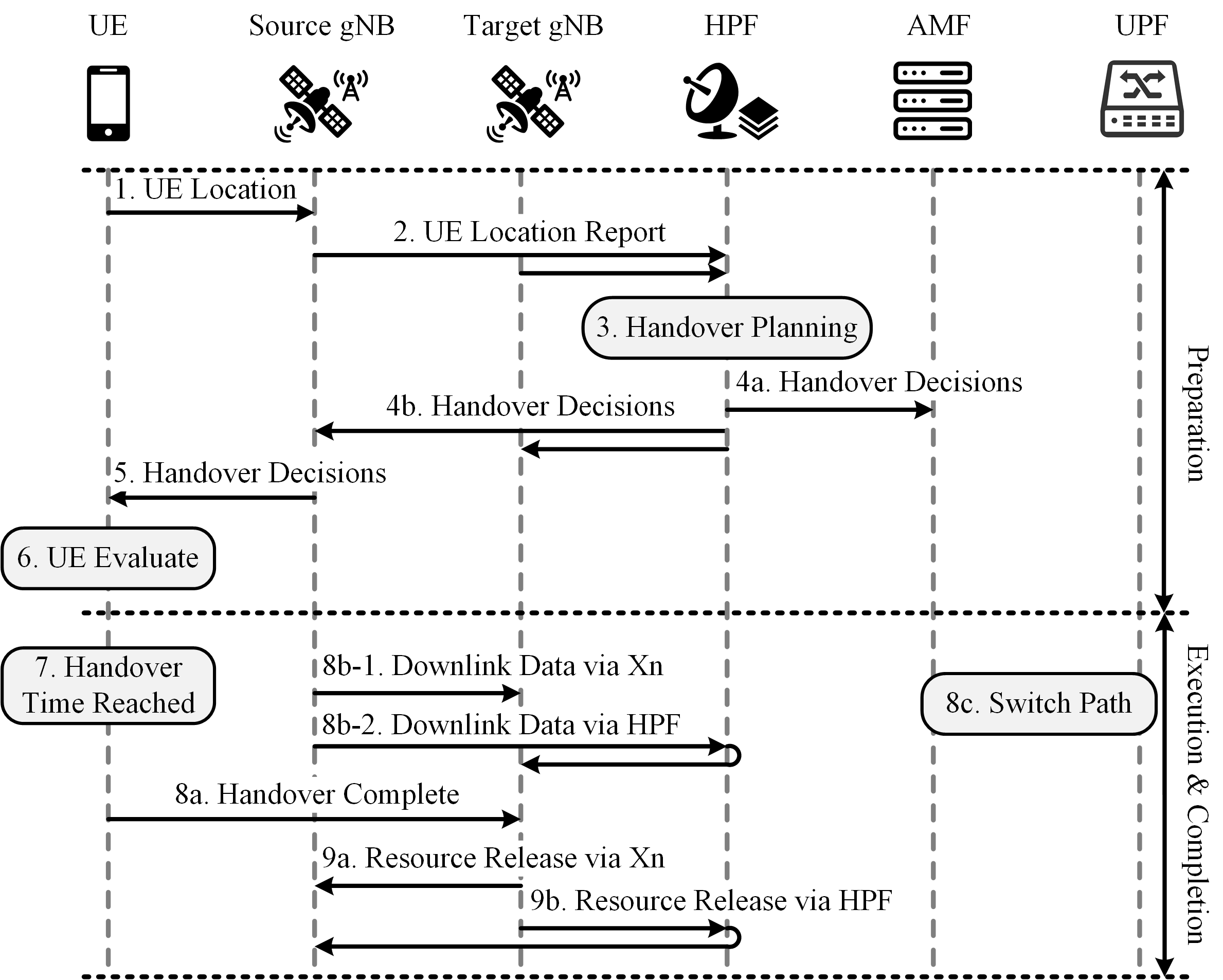}
\caption{Illustration of the PreHO procedure.}
\label{fig:PreHO}
\end{figure}

Based on the above discussion, we can outline the procedure of PreHO.
The primary steps are illustrated in Fig. \ref{fig:PreHO}.
For illustrative clarity, we first assume the predictive information is accurate.
Then in the next subsection, we show that PreHO is compatible with existing handover mechanisms
and can seamlessly revert to them when the predictive information is inaccurate.

The time horizon is divided into planning intervals of equal length. 
Before each planning interval, UEs report their locations to the serving S-gNB, which then forwards this information to the HPF.
With UE locations and ephemeris data, the HPF derives the optimal handover decisions of UEs for the upcoming planning interval.
The HPF also estimates the TA value and allocates uplink transmission grants for every planned handover.
The obtained handover decisions, along with corresponding TA values and uplink transmission grants, are distributed to the relevant
UEs, S-gNBs, and the CN for subsequent execution.

During the planning interval, each UE will automatically connect to the appropriate S-gNB at the specified time according to the handover decisions. 
Meanwhile, the CN autonomously switches the downlink path to the correct S-gNB.
If there are undelivered downlink packets buffered at the source S-gNB,
these packets will be forwarded to the target S-gNB via the Xn interface or the HPF, depending on whether there is an ISL path between the source and target S-gNB.
The resource release message is transmitted from the target S-gNB to the source S-gNB in a similar way.

\subsection{Compability with Existing Mechanisms}
In practical deployments, predictive information may deviate from actual network conditions due to various factors.
For example, UEs may leave or join the network during a planning interval,
channel states may change suddenly due to obstruction or weather,
and satellites may deviate from pre-planned trajectories to avoid collisions with space debris \cite{zhao2023first}.
However, since both UEs and S-gNBs are aware of actual network conditions, 
PreHO can seamlessly revert to traditional mechanisms whenever inconsistencies are detected.

To enable this fallback, the HPF transmits expected signal strength profiles alongside handover decisions (during Steps 4b and 5 in Fig. \ref{fig:PreHO}). 
UEs subsequently monitor real-time signal strength. 
If the actual signal from the source S-gNB degrades prematurely or the target S-gNB signal is below expected thresholds at the planned handover time, 
the UE can autonomously revert to signal strength-based handover by issuing an MR. 
The source S-gNB is then responsible for informing both the target S-gNB and CN to disregard the pre-planned handover for that UE. 
A similar fallback is triggered if UEs or satellites significantly deviate from their predicted locations.
In this sense, PreHO can be regarded as an orthogonal solution to the existing handover machanisms.


\subsection{Advantages of PreHO}
\textbf{Reduced signaling overhead.} 
In PreHO, the relevant information for multiple handovers within each planning interval is disseminated to UEs, gNBs, and the CN 
through a single signaling interaction. 
Consequently, the number of signaling messages is substantially reduced, significantly alleviating the signaling overhead. 
This reduction in signaling messages, coupled with a simplified handover procedure (e.g., RACH-less),
also reduces the power consumption associated with handovers, which is non-trivial in existing systems \cite{hassan2022vivisecting}.

\textbf{Improved handover latency.} 
In BHO, the duration of the execution phase is defined as the handover interruption time (HIT),
as UEs cannot transmit or receive any data during this period.
Similarly, the duration of the path switch process is defined as the downlink switching time (DST).
The handover latency refers to the sum of HIT and DST.
According to the data in \cite{iqbal2024rach, 3gpp.38.331, 3gpp.38.133},
eliminating the RACH process will reduce the HIT from 59 ms to 38 ms, resulting in a 35.6\% improvement. 
Given the extended communication latency between LEO satellites and UEs, this improvement is expected to be even more pronounced in LSNs.
Moreover, as outlined in the previous subsection, the handover decisions are transmitted to the CN in advance, 
enabling the CN to autonomously switch the downlink path without communicating with S-gNBs. 
Consequently, the completion phase is integrated into the execution phase, thereby eliminating the DST
and reducing the handover latency.

\textbf{Optimized handover decisions.} A key distinction between PreHO and existing handover mechanisms is the introduction of the HPF, 
which makes optimal handover decisions in a coordinated manner based on global information. 
In this study, handover decisions are based solely on UE locations and ephemeris data. 
However, the HPF can be easily extended to incorporate
additional modules (e.g., weather-based channel modeling) to further enhance the performance of pre-planned handover decisions.




\section{Handover Planning Algorithm}  \label{section:algorithm}
This section presents a detailed description of the planning algorithm that optimizes handover decisions based on UE locations and ephemeris data.
Notice that we only make handover decisions for one planning interval at a time.


\subsection{System Model and Problem Formulation}
Suppose there are $N$ UEs in the considered region that are served by a LSN.
Based on the ephemeris data, we can identify all of the $M$ S-gNBs that may provide service to the UEs during the considered planning interval.
We further divide the planning interval into $T$ time slots of equal length.
The set of considered UEs, S-gNBs, and time slots are denoted by $\mathcal{N}$, $\mathcal{M}$, and $\mathcal{T}$, respectively.

Let $x_{ij}[t] \in \{0, 1\}$ be a binary user association variable indicating whether UE
$i\in\mathcal{N}$ is served by S-gNB $j\in\mathcal{M}$ at slot $t$.
A handover occurs when a UE switches from one S-gNB to another across consecutive time slots.
Therefore, we can calculate the number of handovers during a planning interval based on the values of $x_{ij}[t]$.
Specifically, if UE $i$ retains the same serving S-gNB at slots $t$ and $t+1$, we have
$\sum_{j\in\mathcal{M}} \left(x_{ij}[t+1] - x_{ij}[t] \right)^2 = 0$.
Conversely, if UE $i$ switches S-gNB at slot $t+1$,
we have $\sum_{j\in\mathcal{M}} \left(x_{ij}[t+1] - x_{ij}[t] \right)^2 = 2$.
Consequently, the total number of handovers during $T$ time slots can be expressed as
\begin{equation*}
  N_{\mathrm{HO}} = \sum_{t\in\mathcal{T}} \sum_{i\in\mathcal{N}} \sum_{j\in\mathcal{M}} \frac{1}{2} \left( x_{ij}[t+1] - x_{ij}[t] \right)^2.
\end{equation*}

The position of UE $i$ is represented as a three-dimensional Cartesian coordinate, denoted by 
$L^u_i = (L^{u,x}_i, L^{u,y}_i, L^{u,z}_i)$.
Notice that we have assumed the UE is relatively stationary during a planning interval
so its position does not change across time slots.
Similarly, based on the ephemeris data, the position of S-gNB $j$ at the beginning of slot $t$ can be easily derived and is denoted by
$L^s_j[t] = (L^{s,x}_j[t], L^{s,y}_j[t], L^{s,z}_j[t])$.
Using the method in \cite{pratt2019satellite}, one can easily calculate the elevation angle between UE $i$ and S-gNB $j$,
which is denoted by $\theta(L^u_i, L^s_j[t])$.
In current practice, it is required that the elevation angle between a UE $i$ and its serving S-gNB $j$
must be larger than a threshold $\theta_{min}$, which can be expressed as
\begin{equation}
  \theta \left( L^u_i, L^s_j[t] \right) \geq x_{ij}[t]\theta_{min}.
  \label{cons:elevation}
\end{equation}
Since the positions of UEs and S-gNBs are known,
this constraint actually specifies the set of S-gNBs that can provide service to UEs at each slot.
Define $\mathcal{M}_i[t] = \{ j\in\mathcal{M} \mid \theta \left( L^u_i, L^s_j[t] \right) \geq \theta_{min} \}$ as the set of S-gNBs
whose elevation angle from UE $i$ satisfies the required threshold, then constraint \eqref{cons:elevation} is equivalent to
\begin{equation*}
  x_{ij}[t] \leq I_{j\in\mathcal{M}_i[t]}, \quad \forall i\in\mathcal{N}, \forall j\in\mathcal{M}, \forall t\in\mathcal{T},
\end{equation*}
where $I_{j\in\mathcal{M}_i[t]}$ is the indicator function that returns $1$ if $j\in\mathcal{M}_i[t]$ and $0$ otherwise.

According to the Shannon capacity formula, if UE $i$ is served by S-gNB $j$ at slot $t$, then the amount of data that can be transmitted
between them is given by
\begin{equation*}
  D_{ij}[t] = \Delta t_{ij} B_{ij} \log_2 \left( 1 + \mathrm{SINR}_{ij}[t] \right),
\end{equation*}
where $\Delta t_{ij}$ denotes the transmission time allocated to UE $i$ by S-gNB $j$ during slot $t$,
$B_{ij}$ represents the spectrum bandwidth allocated to UE $i$ by S-gNB $j$ at slot $t$,
and $\mathrm{SINR}_{ij}[t]$ is the corresponding signal-to-interference-plus-noise ratio (SINR) between UE $i$ and S-gNB $j$.
In practice, $\mathrm{SINR}_{ij}[t]$ can be estimated using analytical modeling approaches \cite{guidotti2020non} or data-driven methods \cite{cerar2021machine}.

As wireless resources must be shared among UEs served by the same S-gNB, the total time and spectrum resources allocated by an S-gNB in each time slot
are constrained by the slot duration $\Delta t$ and the maximum available spectrum bandwidth $B^{max}_{j}$, respectively.
Moreover, since the transmitted data volume $D_{ij}[t]$ is linear with respect to both time and spectrum bandwidth, 
we introduce, for notational simplicity, a single decision variable $y_{ij}[t] \in [0,1]$ to represent the fraction of resources allocated to UE $i$ by S-gNB $j$ at slot $t$,
defined as
\begin{equation*}
  y_{ij}[t] = \frac{\Delta t_{ij} B_{ij}}{\Delta t B^{max}_j}.
\end{equation*}
We further define the maximum achievable throughput between UE $i$ and S-gNB $j$ at time slot $t$ as
\begin{equation*}
  D^{max}_{ij}[t] = \Delta t B^{max}_j \log_2 \left( 1 + \mathrm{SINR}_{ij}[t] \right).
\end{equation*}
Accordingly, the transmitted data can be expressed as $D_{ij}[t] = y_{ij}[t] D^{max}_{ij}[t]$,
and the aggregate throughput of UE $i$ at slot $t$ is given by:
\begin{equation*}
  D_{i}[t] = \sum_{j\in\mathcal{M}}y_{ij}[t] D^{max}_{ij}[t].
\end{equation*}

It is important to emphasize that the variable $y_{ij}[t]$ does not represent an explicit scheduling decision. 
Rather, it is introduced to estimate the utility of each UE under given user association status and thereby facilitate the optimization of handover decisions. 
The actual wireless resource allocation (e.g. allocation of resource blocks in 4G/5G networks) is performed in real time after the handover decisions are made, 
based on more up-to-date information, such as instantaneous channel conditions and current user traffic demands.

Suppose the user experience under given throughput can be quantified by the utility function $u_i(D_{i}[t])$, then the overall system utility is given by
\begin{equation*}
  U_{\mathrm{UE}} = \sum_{t\in\mathcal{T}} \sum_{i\in\mathcal{N}} u_i\left( D_{i}[t] \right).
\end{equation*}
In practice, it is commonly assumed that the utility function $u_i(\cdot)$ satisfies the following properties \cite{kelly1998rate}:
\begin{assumption}
  The utility function $u_i(\cdot)$ is non-decreasing and concave.
  \label{assumption}
\end{assumption}
\noindent
A widely adopted family of utility functions satisfying this assumption is the $\alpha$-fairness class \cite{xinying2023guide}, 
given by:
\begin{equation}
  u^{\alpha}_i(D_i[t]) = 
    \begin{cases}
      \frac{D_i[t]^{1-\alpha}}{1-\alpha}, & \mbox{for } \alpha \geq 0, \alpha \neq 1 \\
      \log(D_i[t]), & \mbox{for } \alpha = 1.
    \end{cases}
    \label{eq:alpha_fairness}
\end{equation}
These utility functions form a continuum that stretches from the utilitarian criterion ($\alpha = 0$) to the maximin criterion ($\alpha \to \infty$),
enabling a tunable trade-off between fairness and efficiency.
In the following, we will use $u^{\alpha}_i(\cdot)$ as an illustrative example,
but our algorithm is applicable to any utility function satisfying Assumption \ref{assumption}.

Based on the above system model, the coordinated handover decision problem can be formulated as:
\begin{align}
  \min_{\bm{x},\bm{y}}\quad & N_{\mathrm{HO}} - \gamma U_{\mathrm{UE}} \label{problem} \\
  s.t.\quad & x_{ij}[t] \leq I_{j\in\mathcal{M}_i[t]}, \quad \forall i\in\mathcal{N}, \forall j\in\mathcal{M}, \forall t\in\mathcal{T} \quad \tag{\ref{problem}{a}} \label{problem:min_elevation} \\
            & \sum_{j\in\mathcal{M}} x_{ij}[t] = 1, \quad \forall i\in\mathcal{N}, \forall t\in\mathcal{T} \tag{\ref{problem}{b}} \label{problem:service_continuity} \\
            & \sum_{i\in\mathcal{N}} y_{ij}[t] \leq 1, \quad \forall j\in\mathcal{M}, \forall t\in\mathcal{T} \tag{\ref{problem}{c}} \label{problem:resource_allocation} \\
            & x_{ij}[t]\in\{0,1\}, 0 \leq y_{ij}[t] \leq x_{ij}[t], \notag \\
            & \qquad \qquad \qquad \forall i\in\mathcal{N}, \forall j\in\mathcal{M}, \forall t\in\mathcal{T} \tag{\ref{problem}{d}} \label{problem:variable}
\end{align}
where $\bm{x}$ and $\bm{y}$ are the collections of variables $x_{ij}[t]$ and $y_{ij}[t]$,
$\gamma$ is the weighting factor that balances the trade-off between the total number of handovers and the overall utility of UEs.
Constraint \eqref{problem:min_elevation} guarantees that if UE $i$ is served by S-gNB $j$ at slot $t$, then the elevation angle must be larger
than the required threshold $\theta_{min}$.
Constraint \eqref{problem:service_continuity} is the service continuity constraint that ensures every UE will be served by exactly one S-gNB
at each time slot.
Constraint \eqref{problem:resource_allocation} states that the allocated wireless resources cannot exceed the capacity of each S-gNB.
Constraint \eqref{problem:variable} specifies the feasible domain for decision variables, with
the condition $y_{ij}[t] \leq x_{ij}[t]$ ensuring that a UE only receives resources from its serving S-gNB.

The formulated problem \eqref{problem} is a mixed-integer programming problem, which is computationally intractable in general.
In fact, it can be shown that the problem remains NP-hard even in the simplified case with a single time slot, as stated in the following theorem:
\begin{theorem}
  Problem \eqref{problem} is NP-hard even if $T=1$.
  \label{theorem:1}
\end{theorem}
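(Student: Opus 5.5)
We prove the theorem by a polynomial-time reduction from \textsc{Partition}, which is NP-complete. Given positive integers $a_1,\dots,a_N$ with $\sum_i a_i = 2A$, we build a single-slot instance of problem \eqref{problem} whose optimal value reaches a computable threshold if and only if the integers can be split into two subsets of sum $A$.

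\textbf{Instance.} Take $T=1$, $N$ UEs and $M=2$ S-gNBs, both visible to every UE, so $\mathcal{M}_i[1]=\{1,2\}$ and constraint \eqref{problem:min_elevation} is vacuous. With a single slot, $N_{\mathrm{HO}}$ does not depend on the decision variables. Either it is identically zero, or, if the association before the planning interval enters, we fix that association to a third S-gNB that is not visible in the slot, so every UE incurs exactly one handover. In both cases $N_{\mathrm{HO}}$ is a constant and we take $\gamma=1$. We choose SINRs so that $D^{max}_{ij}[1]=C:=A$ for all $i,j$; any positive value is attainable by picking $B^{max}_j$ and $\mathrm{SINR}_{ij}$. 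For the utilities, Assumption \ref{assumption} only asks for non-decreasing concave functions, so we set $u_i(D)=\min(D,a_i)$, a piecewise-linear utility encodable in polynomial size.

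\textbf{Value of a fixed association.} Fix the association $\bm{x}$, and let $S_j$ be the set of UEs served by S-gNB $j$. Constraints \eqref{problem:resource_allocation}--\eqref{problem:variable} separate across S-gNBs. For S-gNB $j$ we maximize $\sum_{i\in S_j}\min(y_i C, a_i)$ subject to $\sum_{i\in S_j} y_i\le 1$ and $y\ge 0$. The optimum is $\min\bigl(C,\sum_{i\in S_j}a_i\bigr)$: the upper bound is immediate, and it is attained by giving each UE $y_i=a_i/C$ until the budget runs out. Hence the best utility of the instance is
\begin{equation*}
\max_{S_1\cup S_2=\mathcal{N},\, S_1\cap S_2=\emptyset}\ \min\Bigl(A,\sum_{i\in S_1}a_i\Bigr)+\min\Bigl(A,\sum_{i\in S_2}a_i\Bigr).
\end{equation*}
This quantity is at most $2A$, with equality if and only if both sums equal $A$, that is, if and only if the \textsc{Partition} instance is a yes-instance. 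So deciding whether the optimum of \eqref{problem} is at most $\text{const}-2A$ decides \textsc{Partition}, which proves NP-hardness for $T=1$.

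\textbf{Main obstacle.} The hardest step is the choice of utility. With the $\alpha$-fair family and channel-independent SINR, the per-S-gNB value depends only on $|S_j|$ and the sum of $\log D_{ij}$. The problem then becomes an assignment problem with convex congestion costs, which min-cost flow solves in polynomial time. So heterogeneous saturation levels $a_i$ seem necessary.

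If one insists on $\alpha$-fair utilities, we would first try weighted variants $w_i u^{\alpha}_i$. Failing that, we would use UE-dependent $D^{max}_{ij}$ to embed a 3-dimensional-matching-like structure, for example with $\alpha=0$ and capacity effects created by many S-gNBs. We expect that route to be considerably harder, so we present the reduction for general utilities satisfying Assumption \ref{assumption}, which is the class the formulation allows.
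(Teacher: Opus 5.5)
Your reduction is correct, and it takes a genuinely different route from the paper.

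\textbf{What the paper does.} For $T=1$ it rewrites the problem as maximizing $\sum_j u_j(\mathcal{N}_j)$ over assignments of UEs to S-gNBs. It asserts that each $u_j$ is monotone submodular and concludes from the known hardness of matroid-constrained submodular maximization. That only places the single-slot problem inside a class that contains hard instances. It never shows that hard instances are realizable by the particular valuations the resource-allocation subproblem induces. As a hardness argument it therefore runs in the wrong direction.

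\textbf{What your reduction adds.} Your explicit reduction from \textsc{Partition} supplies exactly that direction. With $D^{max}_{ij}[1]=A$ and $u_i(D)=\min(D,a_i)$, the induced valuations are the budget-additive functions $\min\bigl(A,\sum_{i\in S_j}a_i\bigr)$, a hard subclass of the submodular functions the paper invokes. Your observation that $\min(A,s_1)+\min(A,s_2)=2A$ iff $s_1=s_2=A$ then finishes the proof. The paper's view buys structural insight; yours buys an actual proof. The cost is UE-specific, non-differentiable utilities, which Assumption~\ref{assumption} allows but which lie outside the $\alpha$-fair family and the differentiable setting of Theorem~\ref{theorem:subproblem}.

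\textbf{Your closing remark.} It is sharp for $\alpha=1$. The same holds for $\alpha=0$, where the single-slot problem becomes maximum-weight bipartite matching, so the $\alpha=0$ route you sketch would fail. Those instances are submodular-welfare instances yet polynomially solvable, which is precisely why the paper's membership-style argument cannot suffice on its own.

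For $\alpha\notin\{0,1\}$, however, the remark is too pessimistic. The per-S-gNB optimum is $W_j^{\alpha}/(1-\alpha)$ with $W_j=\sum_{i\in S_j}\bigl(D^{max}_{ij}\bigr)^{(1-\alpha)/\alpha}$. For example, take $\alpha=2$ and $D^{max}_{i1}=D^{max}_{i2}=a_i^{-2}$. The total utility is then $-(W_1^2+W_2^2)$ with $W_1+W_2=2A$, and equality at $-2A^2$ holds iff $W_1=W_2=A$, by strict convexity. So your \textsc{Partition} argument goes through inside the $\alpha$-fair family as well, with no need for a 3-dimensional-matching construction.
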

\begin{proof}
  %
  For a single time slot, there is no handover so the objective of problem \eqref{problem} is equivalent to $\max_{\bm{x}, \bm{y}} U_{\mathrm{UE}}$.
  Clearly, the optimal resource allocation of S-gNB $j$ only depends on the set of UEs it serves, 
  denoted by $\mathcal{N}_j = \{ i\in\mathcal{N} \mid x_{ij} = 1 \}$.
  Notice that we have dropped the time slot index $t$ for brevity.
  Consequently, the utility of UE $i\in\mathcal{N}_j$ can be expressed as a function of $\mathcal{N}_j$, denoted by $u_i(\mathcal{N}_j)$.
  Define $u_j(\mathcal{N}_j) = \sum_{i\in\mathcal{N}_j} u_i(\mathcal{N}_j)$ as the sum of utilities of all UEs served by S-gNB $j$,
  then $U_{\mathrm{UE}} = \sum_{j\in\mathcal{M}} u_j(\mathcal{N}_j)$.
  Since $u_i(D_{ij})$ is non-decreasing and concave, it is easy to verify that $u_j(\mathcal{N}_j)$ is a monotone submodular set function.
  As a result, problem \eqref{problem} is equivalent to a matroid constrained submodular maximization problem, which is known to be NP-hard
  \cite{qu2019distributed,calinescu2007maximizing}.
\end{proof}

\subsection{Algorithm Design} \label{subsection:algorithm_design}
Theorem \ref{theorem:1} indicates that directly solving problem \eqref{problem} is computationally intractable even for a significantly simplified scenario.
To obtain near-optimal solutions efficiently, this subsection develops a heuristic algorithm based on alternating optimization and dynamic programming.
Our approach is to first show that the optimal resource allocation $\bm{y}^*$ can be easily obtained for any given user association $\bm{x}$.
We then propose an efficient iterative algorithm that converges to the local optimum by optimizing the user association decisions for different
UEs in each step.

\subsubsection{Resource Allocation}
We first assume the user association decision $\bm{x}$ is given.
For convenience, let $\mathcal{N}_j[t]$ denote the set of UEs served by S-gNB $j$ at slot $t$, i.e. $\mathcal{N}_j[t] = \left\{ i\in\mathcal{N} \mid x_{ij}[t] = 1 \right\}$.
Then for any S-gNB $j\in\mathcal{M}$ and time slot $t\in\mathcal{T}$, the corresponding resource allocation subproblem can be expressed as follows
\begin{align}
  \max_{\bm{y}_j[t]} \quad & U_{\mathrm{UE}}^j[t] = \sum_{i\in\mathcal{N}_j[t]} u_i \left( D_{ij}[t] \right) \label{subproblem} \\
  s.t.\quad & \sum_{i\in\mathcal{N}_j[t]} y_{ij}[t] \leq 1 \tag{\ref{subproblem}{a}} \label{subproblem:resource_allocation} \\
            & 0 \leq y_{ij}[t] \leq 1, \quad \forall i\in\mathcal{N}_j[t] \tag{\ref{subproblem}{b}} \label{subproblem:variable}
\end{align}
where $\bm{y}_j[t]$ is the collection of variables $y_{ij}[t]$ for $i\in\mathcal{N}_j[t]$,
$D_{ij}[t] = y_{ij}[t] D^{max}_{ij}[t]$ is the amount of transmitted data.
Since $u_i(\cdot)$ is concave and the constraints are linear, problem \eqref{subproblem} is a convex optimization problem.
By applying KKT conditions, we can obtain the following optimality condition:
\begin{theorem}
  A feasible resource allocation $\bm{y}^*_j[t]$ is optimal if and only if there exists a constant $\lambda^*$ 
  such that $\frac{\partial u_i(D_{ij}[t])}{\partial y^*_{ij}[t]} = \lambda^*$
  for all $y^*_{ij}[t] > 0$ and $\frac{\partial u_i(D_{ij}[t])}{\partial y^*_{ij}[t]} \leq \lambda^*$ for all $y^*_{ij}[t] = 0$.
  \label{theorem:subproblem}
\end{theorem}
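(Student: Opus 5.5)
The plan is to treat problem \eqref{subproblem} as a concave maximization over a polyhedron and invoke the KKT conditions, which are both necessary and sufficient here. The objective $U^j_{\mathrm{UE}}[t]=\sum_{i\in\mathcal{N}_j[t]} u_i(y_{ij}[t]D^{max}_{ij}[t])$ is concave because each $u_i$ is concave by Assumption \ref{assumption} and composition with a linear map preserves concavity. All constraints are affine, so Slater's condition is not needed: linearity constraint qualification already makes KKT necessary at any optimum, and concavity makes KKT sufficient.

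First I would write the Lagrangian with multiplier $\lambda\ge 0$ for \eqref{subproblem:resource_allocation}, multipliers $\mu_i\ge 0$ for $y_{ij}[t]\ge 0$, and $\nu_i\ge0$ for $y_{ij}[t]\le 1$. Stationarity then reads $\frac{\partial u_i}{\partial y_{ij}[t]} - \lambda + \mu_i - \nu_i = 0$. Complementary slackness gives $\mu_i y_{ij}[t]=0$ and $\nu_i(1-y_{ij}[t])=0$.

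Second, I would dispose of the upper bounds $y_{ij}[t]\le 1$, since they are implied by \eqref{subproblem:resource_allocation} together with nonnegativity. The cleanest route is to drop them from the formulation before writing KKT, so that $\nu_i=0$ and no case analysis is needed. Stationarity then gives $\partial u_i/\partial y^*_{ij}[t]=\lambda^*-\mu_i$. Whenever $y^*_{ij}[t]>0$ we have $\mu_i=0$, which yields equality with $\lambda^*$. Whenever $y^*_{ij}[t]=0$ we have $\mu_i\ge0$, which yields $\partial u_i/\partial y^*_{ij}[t]\le\lambda^*$. This proves the "only if" direction.

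Third, for the "if" direction, I would start from a feasible $\bm{y}^*_j[t]$ and a constant $\lambda^*$ satisfying the stated conditions. Setting $\mu_i=\lambda^*-\partial u_i/\partial y^*_{ij}[t]\ge 0$ produces multipliers satisfying stationarity and complementary slackness for the $\mu_i$. Two points need care.
\begin{itemize}
\item $\lambda^*$ must be nonnegative. Because $u_i$ is non-decreasing, every derivative is $\ge 0$, so if some $y^*_{ij}[t]>0$ then $\lambda^*=\partial u_i/\partial y^*_{ij}[t]\ge0$. If all $y^*$ are zero, I would replace $\lambda^*$ by $\max(\lambda^*,0)$, and the conditions still hold.
\item Complementary slackness $\lambda^*(1-\sum_i y^*_{ij}[t])=0$ must hold. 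This is the main obstacle, since the statement as written does not explicitly require the budget to be tight. I would handle it by monotonicity: either $\lambda^*=0$, or $\lambda^*>0$ and the budget is saturated. In the remaining case of slack budget with $\lambda^*>0$, the allocation is still optimal, because every user with positive derivative can only gain from more resource. This should be argued explicitly, or the theorem's condition read as implicitly including $\sum_i y^*_{ij}[t]=1$ when $\lambda^*>0$, which always holds for an optimum under a strictly increasing utility.
\end{itemize}
With the multipliers in place, sufficiency of KKT for concave programs finishes the proof.

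Finally, I would remark on $\alpha\ge1$, where $u_i$ is undefined or $-\infty$ at $0$. There the derivative is infinite at $y=0$, so every optimum has $y^*_{ij}[t]>0$, and the equality case is the only relevant one. In that setting I would apply KKT on the open domain $y>0$.
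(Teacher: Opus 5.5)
Your ``only if'' direction is correct and uses the same KKT argument as the paper: drop the redundant bounds $y_{ij}[t]\le 1$, attach $\lambda^*$ to \eqref{subproblem:resource_allocation} and $\mu_i^*\ge 0$ to $y_{ij}[t]\ge 0$, and read off the two cases from complementary slackness. Your sign, $\partial u_i/\partial y^*_{ij}[t]=\lambda^*-\mu_i^*$, is the correct one. The paper's stationarity condition $\lambda^*+\mu_i^*$ contradicts its own conclusion $\lambda^*+\mu_i^*\le\lambda^*$. The paper also never argues sufficiency: it declares the budget tight and stops after necessity.

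The gap is in your ``if'' direction. You claim that with a slack budget and $\lambda^*>0$ ``the allocation is still optimal.'' That is false, and your own reason (users with positive derivative gain from more resource) shows the opposite. Take a single UE with $u_i(D)=D$ and $y^*_{ij}[t]=1/2$. The hypothesis holds with $\lambda^*=D^{max}_{ij}[t]$, yet $y_{ij}[t]=1$ is strictly better.

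The same UE with $y^*_{ij}[t]=0$ and any $\lambda^*\ge D^{max}_{ij}[t]$ shows that your $\max(\lambda^*,0)$ patch does not rescue the all-zero case either. There $\lambda^*\ge 0$ already holds, because derivatives of a non-decreasing $u_i$ are nonnegative; what fails is complementary slackness. Nor can tightness be inferred from monotonicity: it is a property of optima, which is exactly what you are trying to establish.

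So the converse as literally stated is false, and you must commit to your second option. Add $\sum_{i\in\mathcal{N}_j[t]}y^*_{ij}[t]=1$, or more generally $\lambda^*\bigl(1-\sum_{i}y^*_{ij}[t]\bigr)=0$, to the hypothesis; this is precisely the paper's implicit reading. Sufficiency then follows in one line from concavity. Let $g_i$ be the derivative at $y^*_{ij}[t]$ and $\bm{y}_j[t]$ any feasible allocation:
\[
U_{\mathrm{UE}}^j[t](\bm{y}_j[t])-U_{\mathrm{UE}}^j[t](\bm{y}^*_j[t])\le\sum_{i}g_i\bigl(y_{ij}[t]-y^*_{ij}[t]\bigr)\le\lambda^*\Bigl(\sum_{i}y_{ij}[t]-1\Bigr)\le 0.
\]
The middle inequality uses $g_i=\lambda^*$ when $y^*_{ij}[t]>0$, $g_i y_{ij}[t]\le\lambda^* y_{ij}[t]$ when $y^*_{ij}[t]=0$, and the added tightness or complementary-slackness condition. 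The last inequality uses $\lambda^*\ge 0$, which holds automatically as noted above.
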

\begin{proof}
  Since $u_i(\cdot)$ is non-decreasing, we can safely assume constraint \eqref{subproblem:resource_allocation} is tight.
  In addition, we can also omit $y_{ij}[t] \leq 1$ in constraint \eqref{subproblem:variable} without changing the feasible region.
  Let $\lambda^*$ and $\mu_i^*$ be the multipliers corresponding to constraint \eqref{subproblem:resource_allocation} and \eqref{subproblem:variable}, respectively.
  Then according to the KKT conditions, the optimal $\bm{y}^*_j[t]$ must satisfy
  \begin{gather}
    \frac{\partial u_i(D_{ij}[t])}{\partial y^*_{ij}[t]} = \lambda^* + \mu_i^*, \quad \forall i\in\mathcal{N}_j[t] \label{eq:stationarity} \\
    \mu_i^* \geq 0, \quad \forall i\in\mathcal{N}_j[t] \label{eq:dual} \\
    \sum_{i\in\mathcal{N}_j[t]} \mu_i^* y^*_{ij}[t] = 0. \label{eq:slackness}
  \end{gather}
  According to \eqref{eq:dual} and \eqref{eq:slackness}, if $y^*_{ij}[t] > 0$, then we must have $\mu_i^* = 0$.
  Substituting into \eqref{eq:stationarity} yields $\frac{\partial u_i(D_{ij}[t])}{\partial y^*_{ij}[t]} = \lambda^*$.
  Similarly, if $y^*_{ij}[t] = 0$, then we have $\frac{\partial u_i(D_{ij}[t])}{\partial y^*_{ij}[t]} = \lambda^* + \mu^*_i \leq \lambda^*$.
\end{proof}
For a wide range of utility functions, we can directly obtain the analytical form
of the optimal $\bm{y}^*_j[t]$ based on Theorem \ref{theorem:subproblem}.
We take the $\alpha$-fairness utility functions given in \eqref{eq:alpha_fairness} as an example.
For convenience, we only consider the situation where $\alpha > 0$.
In this case, it can be easily verified that $y^*_{ij}[t] > 0$ for all $i\in\mathcal{N}_j[t]$, hence we have
\begin{equation}
  \frac{\partial u_i(D_{ij}[t])}{\partial y^*_{ij}[t]} = y^*_{ij}[t]^{-\alpha} D^{max}_{ij}[t]^{1-\alpha} = \lambda^*, \quad \forall i\in\mathcal{N}_j[t].
  \label{eq:y_optimal}
\end{equation}
Since $u_i(\cdot)$ is non-decreasing, we can safely assume constraint \eqref{subproblem:resource_allocation} is tight.
Substituting \eqref{eq:y_optimal} into $\sum_{i\in\mathcal{N}_j[t]} y^*_{ij}[t] = 1$ yields
\begin{equation}
  \lambda^* = \bigg( \sum_{i\in\mathcal{N}_j[t]} D^{max}_{ij}[t]^{\frac{1-\alpha}{\alpha}} \bigg)^{\alpha}.
  \label{eq:lambda_optimal}
\end{equation}
Combining \eqref{eq:lambda_optimal} with \eqref{eq:y_optimal}, we have
\begin{equation*}
  y^*_{ij}[t] = \frac{D^{max}_{ij}[t]^{\frac{1-\alpha}{\alpha}}}{\sum_{i\in\mathcal{N}_j[t]}D^{max}_{ij}[t]^{\frac{1-\alpha}{\alpha}}}.
\end{equation*}

\begin{algorithm}[t]
  \caption{Algorithm for Solving $\bm{y}^*_j[t]$}
    \label{alg:resource_allocation}
    \begin{algorithmic}[1]
      \STATE Initialization: $\lambda_l = \lambda_{min}, \lambda_r = \lambda_{max}$;
      \WHILE{$\lambda_r - \lambda_l \geq \epsilon$}
          \STATE $\lambda_m = \frac{1}{2}(\lambda_l + \lambda_r)$;
          \FOR{$i\in\mathcal{N}_j[t]$}
            \STATE Calculate $y_{ij}[t] = f(\lambda_m)$ using bisection search; \label{line:bisection}
          \ENDFOR
          \IF{$\sum_{i\in\mathcal{N}_j[t]} y_{ij}[t] > 1$}
              \STATE $\lambda_l = \lambda_m$;
          \ELSE
              \STATE $\lambda_r = \lambda_m$;
          \ENDIF
      \ENDWHILE
      \STATE $\lambda^* = \frac{1}{2}(\lambda_l + \lambda_r)$;
      \RETURN ${y}^*_{ij}[t] = f(\lambda^*)$
    \end{algorithmic}
\end{algorithm}

For more general utility functions that may not admit closed-form analytical expressions, 
the optimal solution $\boldsymbol{y}^*_j[t]$ can be efficiently obtained via a bisection search to determine the optimal dual variable $\lambda^*$. 
The overall procedure is summarized in Algorithm~\ref{alg:resource_allocation}.
At initialization, the left and right endpoints of the search interval, denoted by $\lambda_l$ and $\lambda_r$, 
are set to predefined lower and upper bounds on $\lambda^*$, namely $\lambda_{\min}$ and $\lambda_{\max}$, respectively. 
In each iteration, the midpoint $\lambda_m$ of the current interval is first computed. The loop in Lines~4-6 then seeks the value of $y_{ij}[t]$ that satisfies
\begin{equation}
  \frac{\partial u_i\bigl(D_{ij}[t]\bigr)}{\partial y_{ij}[t]} = \lambda_m.
  \label{eq:f}
\end{equation}
For notational convenience, we denote by $f(\lambda_m)$ the solution to \eqref{eq:f}. 
Since the utility function $u_i(\cdot)$ is concave, its derivative is non-increasing, which enables the use of a bisection method to efficiently compute $f(\lambda_m)$. 
Moreover, it follows that $f(\cdot)$ is also a non-increasing function.
Consequently, if the sum of the resulting $y_{ij}[t]$ values exceeds $1$, this indicates that $\lambda_m$ is smaller than the optimal value $\lambda^*$. 
In this case, the lower bound is updated as $\lambda_l = \lambda_m$, and the search proceeds over the upper half of the interval. 
Otherwise, the upper bound is updated as $\lambda_r = \lambda_m$. 
This iterative process continues until the length of the search interval falls below the prescribed accuracy threshold $\epsilon$.

It is worth noting that Algorithm~\ref{alg:resource_allocation} does not rely on the availability of a closed-form 
analytical expression for the utility function. 
Instead, it only requires the evaluation of $u_i(\cdot)$ and its derivative at given points. 
This property allows for a broad class of utility functions to be accommodated, including those parameterized by neural networks. 
Moreover, by employing a bisection search, the proposed algorithm achieves high computational efficiency.
Specifically, suppose that the accuracy requirement in Line~\ref{line:bisection} of Algorithm~\ref{alg:resource_allocation} is also $\epsilon$. 
Then, the time complexity of the inner loop spanning Lines~4-6 is 
$O\left(N_j[t]\log\left(\frac{1}{\epsilon}\right)\right)$, 
where $N_j[t] = |\mathcal{N}_j[t]|$ denotes the cardinality of the set $\mathcal{N}_j[t]$. 
Accounting for the outer bisection loop, the overall time complexity of the algorithm is therefore 
$O\left(N_j[t]\log^2\left(\frac{1}{\epsilon}\right)\right)$.

\subsubsection{User Association}
According to the above discussion, we can easily obtain the optimal resource allocation for any given user association $\bm{x}$, denoted by $\bm{y}^*(\bm{x})$.
Substituting into problem \eqref{problem} yields the following equivalent formulation:
\begin{align}
  \min_{\bm{x}}\quad & N_{\mathrm{HO}} - \gamma U_{\mathrm{UE}} \label{eq_problem} \\
  s.t.\quad & x_{ij}[t] \leq I_{j\in\mathcal{M}_i[t]}, \quad \forall i\in\mathcal{N}, \forall j\in\mathcal{M}, \forall t\in\mathcal{T} \tag{\ref{eq_problem}{a}} \label{eq_problem:min_elevation} \\
            & \sum_{j\in\mathcal{M}} x_{ij}[t] = 1, \quad \forall i\in\mathcal{N}, \forall t\in\mathcal{T} \tag{\ref{eq_problem}{b}} \label{eq_problem:service_continuity} \\
            & x_{ij}[t]\in\{0,1\}, \quad \forall i\in\mathcal{N}, \forall j\in\mathcal{M}, \forall t\in\mathcal{T} \tag{\ref{eq_problem}{d}} \label{eq_problem:variable}
\end{align}
where $U_{\mathrm{UE}}$ now only depends on $\bm{x}$.
The problem in \eqref{eq_problem} remains intractable because the utility of a given UE depends on the set of UEs associated with the same S-gNB. 
As a result, the user association decisions across different UEs are tightly coupled in $U_{\mathrm{UE}}$. 
To address this challenge, we adopt an alternating optimization strategy in which the user association decision of one UE is optimized at a time, 
while the association decisions of all other UEs are assumed to remain unchanged.

Our algorithm is described as follows.
We start from an arbitrary feasible user association decision.
At each iteration, we select a specific UE $i$ and fix the user association decisions of the rest UEs.
Let $U_i(t, j | \bm{x}[t])$ denote the maximum aggregate utility of all UEs at time slot $t$ when UE $i$ is 
served by S-gNB $j$ (i.e., $x_{ij}[t]=1$), while the associations of the remaining UEs are kept unchanged from $\bm{x}[t]$. 
For notational convenience, we denote the resulting user association by $\hat{\bm{x}}[t]$. 
Under this notation, $U_i(t, j | \bm{x}[t])$ can be expressed as
\begin{equation*}
  U_i(t, j | \bm{x}[t]) = \sum_{j'\in\mathcal{M}} \max_{\bm{y}_{j'}[t] \mid \hat{\bm{x}}[t]} U_{\mathrm{UE}}^{j'}[t],
\end{equation*}
where we have placed $\bm{y}_{j'}[t] | \hat{\bm{x}}[t]$ under the maximum operator to indicate 
that the optimization of $\bm{y}_{j'}[t]$ is performed given the user association $\hat{\bm{x}}[t]$. 
Clearly, the value of $U_i(t, j | \bm{x}[t])$ can be efficiently computed using Algorithm~\ref{alg:resource_allocation}. 
For brevity, we will omit the dependence on the initial association $\bm{x}[t]$ in the following text,
i.e., we use $U_i(t,j)$ instead of $U_i(t,j|\bm{x}[t])$.
Since the algorithmic description focuses on a single iteration, this will not cause any ambiguity.

Based on $U_i(t,j)$, we further define
\[
U_i([t_1,t_2], j) = \sum_{t=t_1}^{t_2} U_i(t,j),
\]
which represents the maximum aggregate utility of all UEs over the interval $[t_1,t_2]$ if UE $i$ is continuously served by S-gNB $j$. Accordingly, we define
\[
U_i([t_1,t_2]) = \max_{j \in \mathcal{M}} U_i([t_1,t_2], j),
\]
as the maximum aggregate utility over the same interval when UE $i$ is served by a single, but optimally chosen, S-gNB.

Let $C_i(t)$ denote the minimum objective value of problem~\eqref{eq_problem} over the first $t$ time slots, 
and let $\bm{x}^*_i$ be the corresponding optimal user association decision for UE $i$. 
If $\bm{x}^*_i$ incurs no handover during the first $t$ slots, i.e., UE $i$ is served by the same S-gNB throughout this period, then
\begin{equation}
  C_i(t) = N_{\mathrm{HO}} - \gamma U_{\mathrm{UE}} = -\gamma U_i([1,t]).
  \label{eq:C_case1}
\end{equation}
Otherwise, at least one handover occurs within the $t$ slots. Suppose that the last handover takes place at the end of slot $\tau$. 
In this case, the objective value can be expressed as
\begin{equation}
  C_i(t) = \min_{1 \leq \tau \leq t-1} \bigl\{ C_i(\tau) + 1 - \gamma U_i([\tau+1,t]) \bigr\}.
  \label{eq:C_case2}
\end{equation}
For notational convenience, we set $C_i(0)=0$ and introduce the indicator function $I_{\tau \geq 1}$, which equals $1$ if $\tau \geq 1$ and $0$ otherwise. 
Then, the update rules in \eqref{eq:C_case1} and \eqref{eq:C_case2} can be unified into the following expression:
\begin{align}
  C_i(t) = \min_{0 \leq \tau \leq t-1} \Bigl\{ C_i(\tau) + I_{\tau \geq 1} - \gamma U_i([\tau+1,t]) \Bigr\}.
  \label{eq:update_formula}
\end{align}

\begin{algorithm}[t]
  \caption{Algorithm for Solving $\bm{x}^*_i$}
    \label{alg:user_association}
    \begin{algorithmic}[1]
      \STATE Initialization: Set $C_i(0) = 0$ and calculate $U_i(t,j)$ for all $t\in\mathcal{T}, j\in\mathcal{M}$;
      \FOR{$t \in \{ 1,2, \dots, T\}$} \label{line:loop_start}
      \STATE Let $C_i(t) = - \gamma U_i([1,t])$;
        \FOR{$\tau \in \{ 1,\dots,t-1\}$}
        \STATE Calculate $C_i'(t) = C_i(\tau) + 1 - \gamma U_i([\tau+1, t])$;
        \IF{$C_i'(t) < C_i(t)$}
        \STATE $C_i(t) = C_i'(t)$;
        \ENDIF
        \ENDFOR
      \ENDFOR \label{line:loop_end}
      \STATE Obtain $\bm{x}^*_i$ by backtracking;
      \RETURN $\bm{x}^*_i$
    \end{algorithmic}
\end{algorithm}

Based on \eqref{eq:update_formula}, we can obtain $C_i(T)$ and the corresponding optimal user association $\bm{x}^*_i$ via dynamic programming.
The detailed steps are summarized in Algorithm \ref{alg:user_association}.
At the initialization step, we set $C_i(0) = 0$ and calculate the value of $U_i(t,j)$ for all $t\in\mathcal{T}$ and $j\in\mathcal{M}$.
Notice that obtaining the value of $U_i(t,j)$ requires solving the resource allocation subproblem \eqref{subproblem} for all $j\in\mathcal{M}$.
For situations where the analytical expression of $y^*_{ij}[t]$ is not available, the time complexity of obtaining $U_i(t,j)$
is $\sum_{j\in\mathcal{M}} O(N_j[t] \log^2(\frac{1}{\epsilon})) = O(N \log^2(\frac{1}{\epsilon}))$.
However, if the analytical expression of $y^*_{ij}[t]$ is available, the time complexity reduces to $O(N)$.
Hence, the total time complexity of the initialization step is $O(TMN \log^2(\frac{1}{\epsilon}))$ and $O(TMN)$, respectively.
The loop from line \ref{line:loop_start} to line \ref{line:loop_end} calculates $C_i(T)$ step by step based on \eqref{eq:update_formula}.
The time complexity for calculating $U_i([\tau+1, t])$ is $O(TM)$.
Hence, the time complexity of the loop is $O(T^3M)$.
The time complexity of the whole algorithm is the maximal value of the two processes.
After we have obtained $C_i(T)$, the corresponding optimal user association decision $\bm{x}^*_i$ can be derived by backtracking.

It should be emphasized that Algorithm~\ref{alg:user_association} optimizes the user association decision for a single UE only. 
To obtain a global solution, the algorithm must be iteratively applied to different UEs. 
Since the objective value is monotonically improved in each iteration, this iterative procedure is guaranteed to converge to a local optimum.
Simulation results presented in the next section demonstrate that the algorithm converges after each UE has been optimized once.

\section{Numerical Results} \label{section:simulation}
To evaluate the performance of our mechanism, we developed a prototype of PreHO using UERANSIM \cite{ueransim} and Open5GS \cite{5gs}. 
Specifically, UERANSIM is a widely adopted simulator for UEs and gNBs in 5G networks, while Open5GS is an open-source implementation of the 5G CN. 
The prototype was executed on a commodity personal computer equipped with eight 2.5 GHz CPU cores and 16 GB of RAM.

In the remainder of this section, we demonstrate the performance of PreHO from two aspects. 
First, we illustrate that PreHO can significantly reduce handover latency and signaling overhead. 
Second, we show that the proposed planning algorithm substantially improves user experience compared to benchmark schemes.

\begin{figure}[t]
\centering
\includegraphics[width=0.35\textwidth]{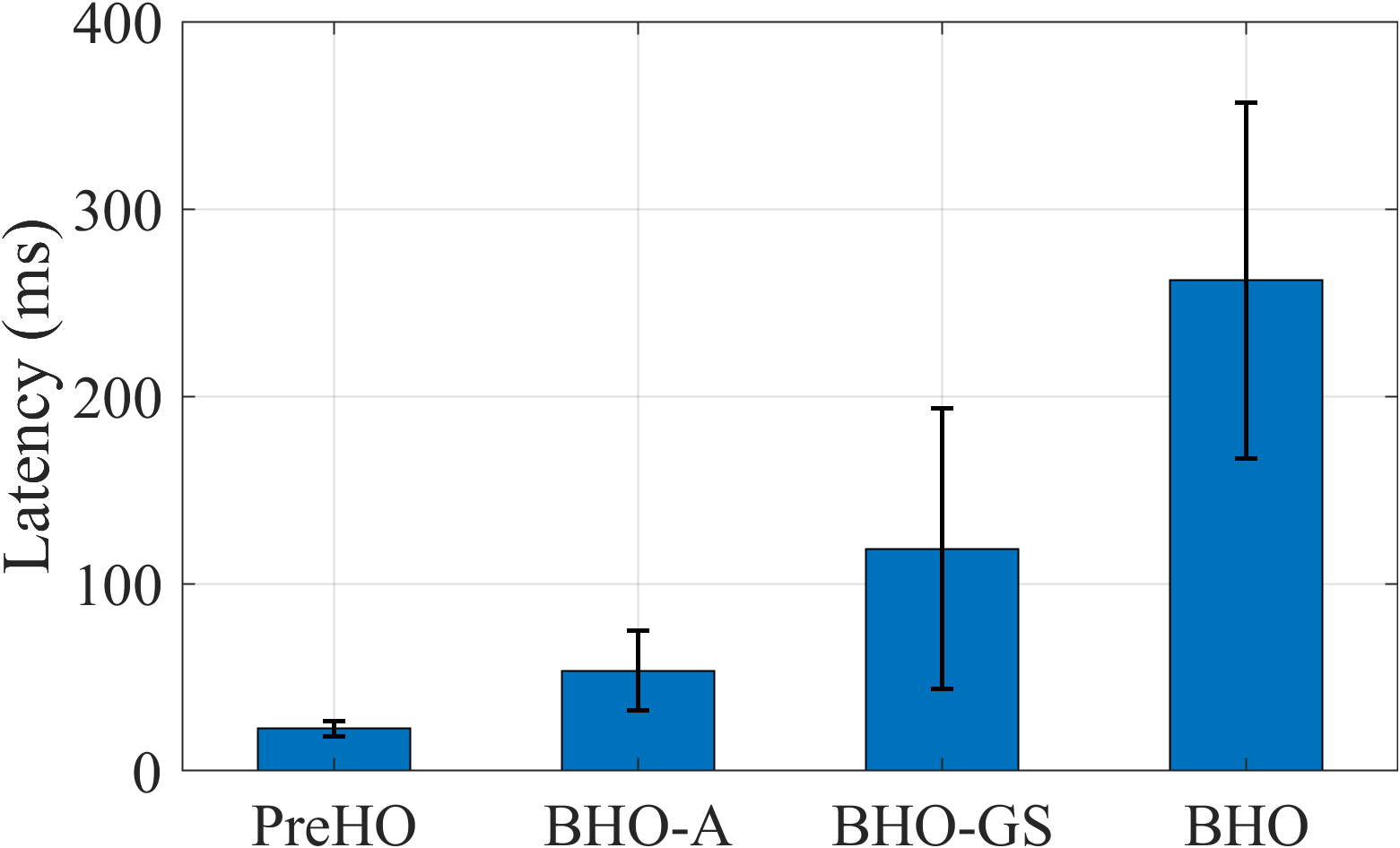}
\caption{Average handover latency in the Starlink constellation.}
\label{fig:average}
\end{figure}

\begin{figure*}[t]
\centering
\subfloat[CDF, Starlink, Similar Direction]{\includegraphics[width=0.32\textwidth]{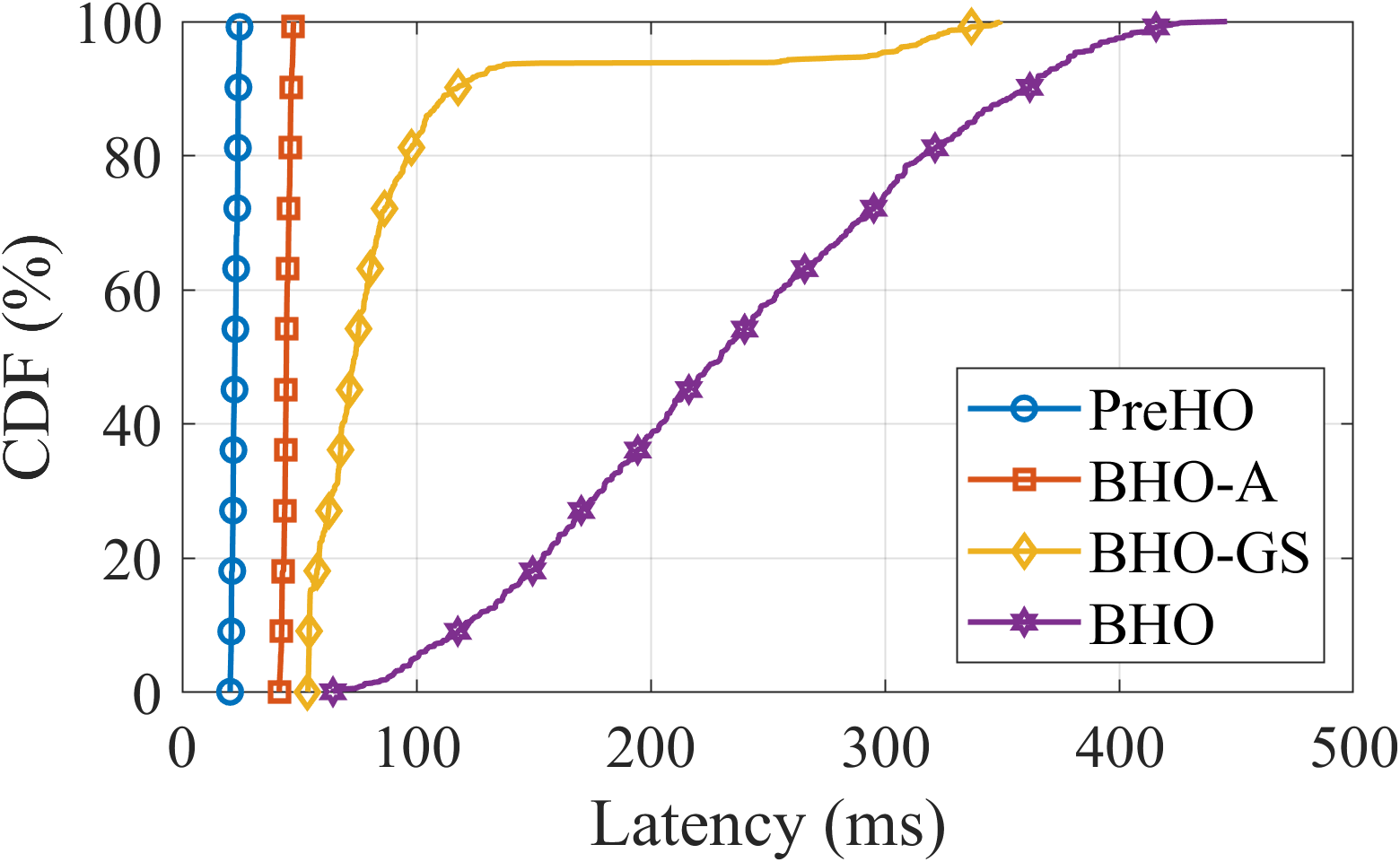} \label{fig:starlink-0}}
\hfil
\subfloat[CDF, Starlink, All Direction]{\includegraphics[width=0.32\textwidth]{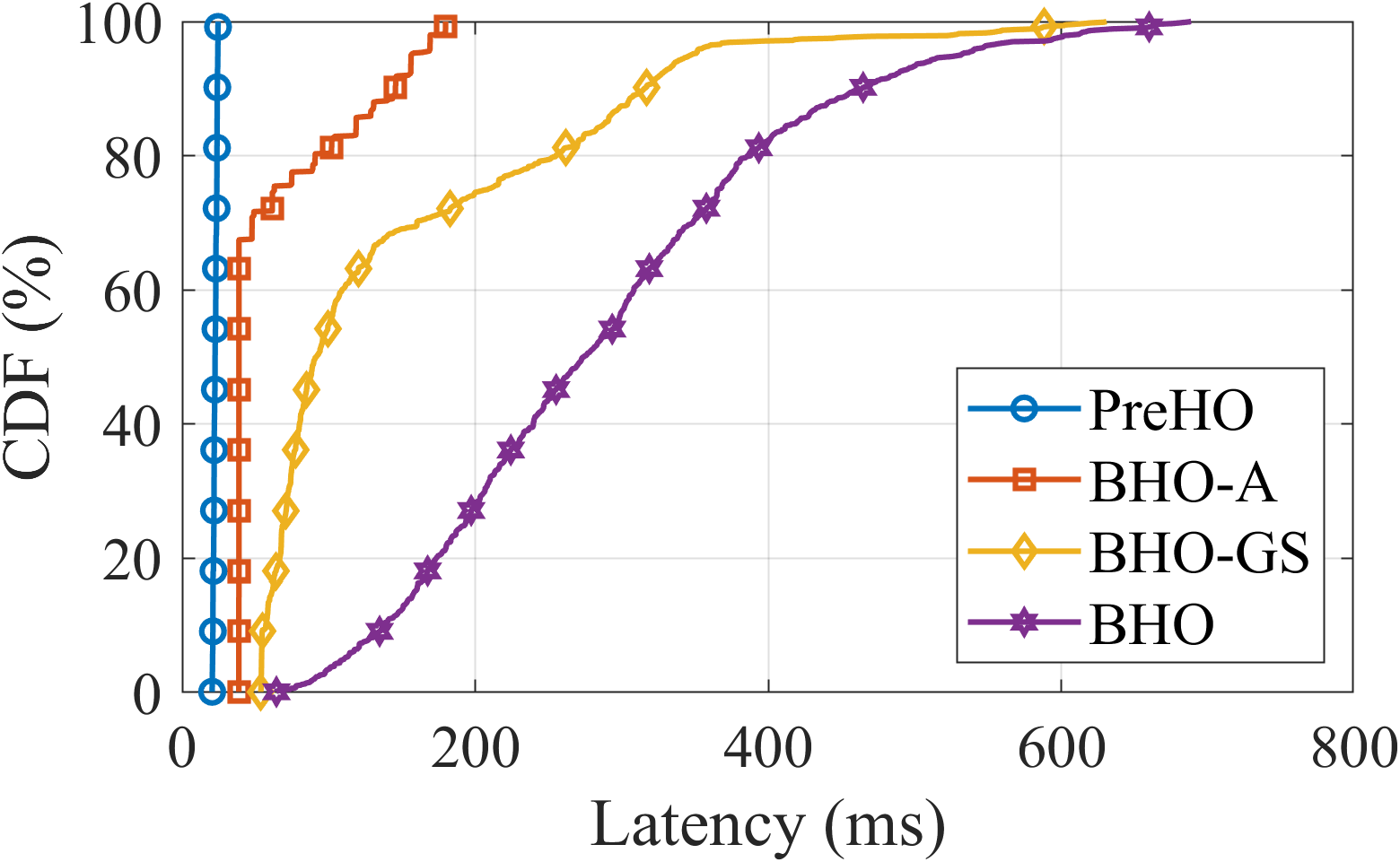} \label{fig:starlink-1}}
\hfil
\subfloat[CDF, Kuiper, Similar Direction]{\includegraphics[width=0.32\textwidth]{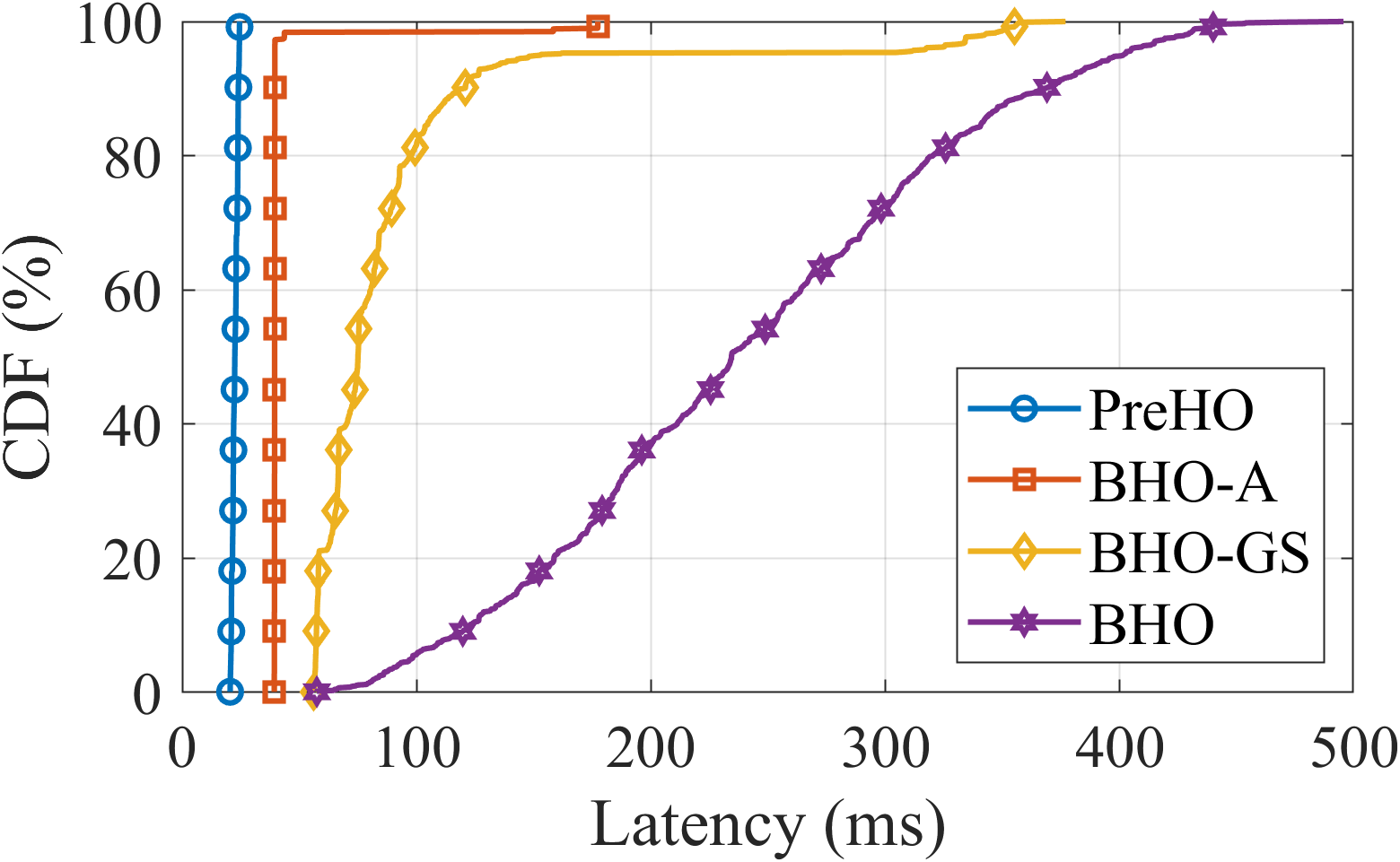} \label{fig:kuiper-0}}
\caption{Average and CDF of handover latency under different target S-gNB selection strategies and constellations.}
\label{fig:latency}
\end{figure*}



\subsection{Handover Procedure}
In our experiment, we selected one thousand UEs randomly distributed across the Earth, excluding polar regions. 
Satellite ephemeris and GS locations are obtained from the publicly available data of the Starlink and Kuiper constellations. 
In addition to the BHO, we also compare the performance of PreHO with two handover mechanisms designed for LSNs:
\begin{itemize}
  \item \textbf{BHO-GS} \cite{han2016distributed}: This mechanism conducts handovers with the assistance of nearby GSs, eliminating 
    the need to transmit signaling messages to the CN.
  \item \textbf{BHO-A} \cite{wu2024accelerating}: This accelerated handover mechanism allows the CN to predict handover decisions and autonomously switch the downlink path.
\end{itemize}

As described in \cite{wu2024accelerating}, there are two different strategies when selecting the target S-gNB.
The ``similar direction strategy'' requires the target S-gNB moves in the same direction as the source S-gNB (e.g., both moving from north to south).
In contrast, the ``all direction strategy'' imposes no such constraint on the direction of movement. 
The main difference is that the Xn interface for S-gNBs moving in opposite directions usually requires much more ISL hops,
leading to substantially higher communication latency \cite{chen2021analysis, zhang2022enabling}.

Fig. \ref{fig:average} presents the overall average handover latency of all schemes in the Starlink constellation.
Specifically, the average latency of PreHO is 22.5 ms, which is 11.6$\times$ shorter than the conventional BHO mechanism. 
With the assistance of nearby GSs, BHO-GS eliminates the communication latency between the GS and the CN, thereby reducing the average latency to 118.6~ms. 
However, this latency level remains insufficiently low to support latency-sensitive applications.
Similar to PreHO, BHO-A also avoids interactions between S-gNBs and the CN by predicting handovers at the CN.
However, BHO-A's procedure includes signaling interactions via the Xn interface, which
could result in considerable latency, particularly for S-gNBs moving in opposite directions. 
As a result, the average handover latency of BHO-A is 53.5 ms, which is approximately twice that of PreHO.

The Cumulative Distribution Function (CDF) of handover latency under various target S-gNB 
selection strategies and constellations is displayed in Fig. \ref{fig:latency}.
It is evident that PreHO consistently outperforms all benchmark methods across different scenarios. 
Comparing Fig. \ref{fig:starlink-0} and \ref{fig:starlink-1}, we observe a significant increase in handover latency 
for all benchmark algorithms, primarily due to the extended Xn interface latency. 
However, our mechanism remains unaffected as the downlink data can be transferred in advance.
Additionally, by comparing Fig. \ref{fig:starlink-0} and \ref{fig:kuiper-0}, 
we observe that the handover latency remains similar across different constellations.
This indicates that the variations in constellation configurations are not the primary factor affecting handover latency.

\subsection{Planning Algorithm}
To demonstrate the performance of our planning algorithm,
we simulate $N=100$ UEs randomly distributed within the region [35-38$^{\circ}$N], [122-125$^{\circ}$E].
The utilized constellation is the Starlink Group 1, which consists of 1584 LEO satellites at an altitude of 550 km,
with a required minimum elevation angle of 40$^\circ$.
The duration of each planning interval is set to 10 minutes, divided into $T=200$ slots, each lasting for 3 seconds.
Based on the above information, one can readily deduce that 
there are $M=39$ S-gNBs that may provide service to the designated region during the considered planning interval.
The SINR between UEs and S-gNBs follows the statistical model in \cite{guidotti2020non} 
and the maximum spectrum bandwidth of each S-gNB is $20$ MHz.
The utility function for each UE is defined as the $\alpha$-fairness function with $\alpha = 1$, where
the amount of transmitted data is measured in megabit (Mb).
In our experiments, the value of the weighting factor $\gamma$ is $2 \times 10^{-3}$.

Our planning algorithm is compared with the following three benchmarks:
\begin{itemize}
  \item \textbf{Largest Signal Strength (LSS)}: A UE initiates handover to a new S-gNB if its SINR is $50\%$ higher than that of the current S-gNB.
  \item \textbf{Longest Service Time (LST)}: A UE connects to the S-gNB with the longest remaining service time when the current S-gNB becomes unavailable.
  \item \textbf{Greedy}: In each slot, UEs greedily select the target S-gNB that optimizes the current objective value.
\end{itemize}

\begin{figure}[t]
  \centering
  \includegraphics[width=0.35\textwidth]{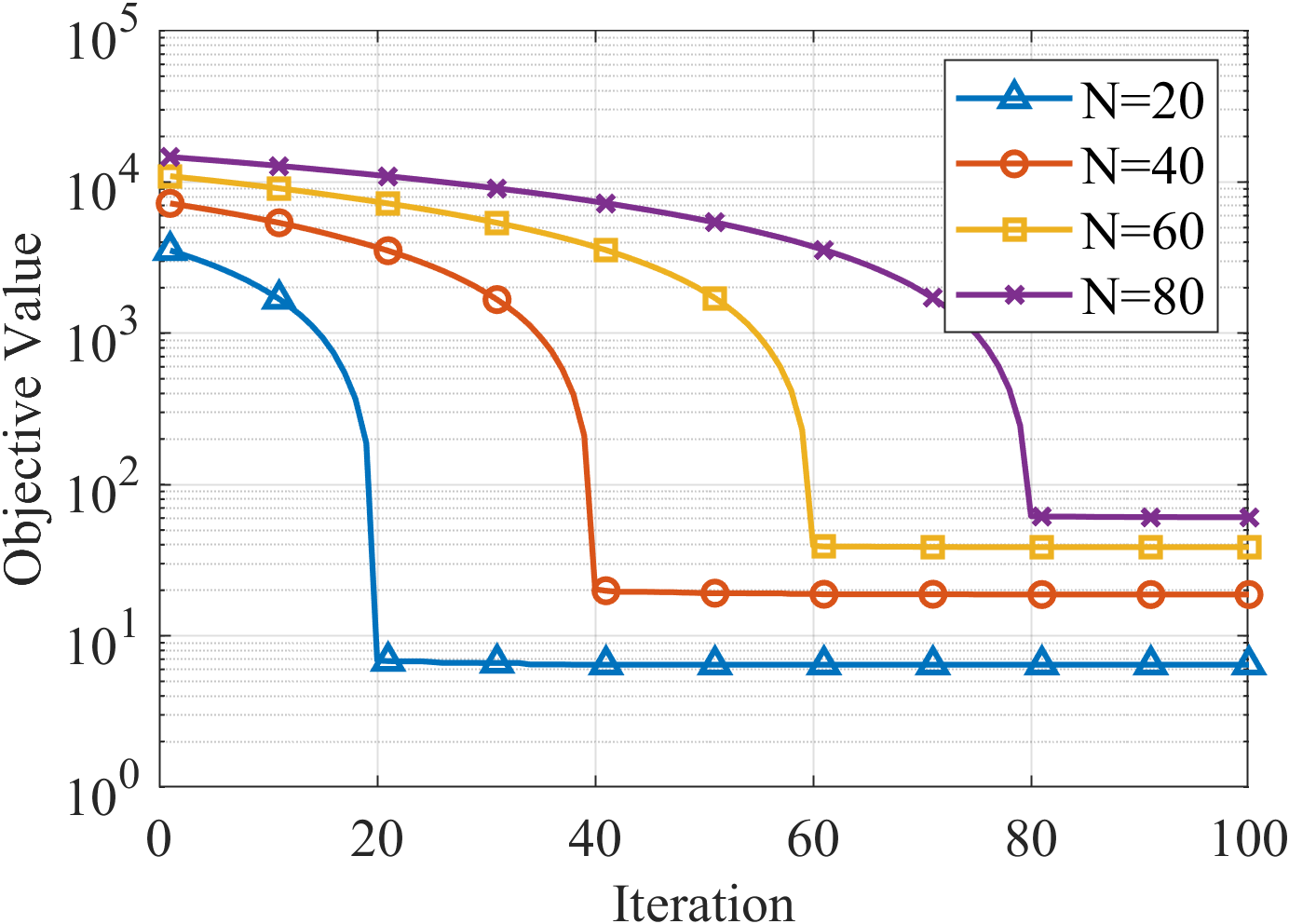}
  \caption{Convergence process of our planning algorithm.}
  \label{fig:convergence}
\end{figure}

\begin{figure}[t]
  \centering
  \includegraphics[width=0.35\textwidth]{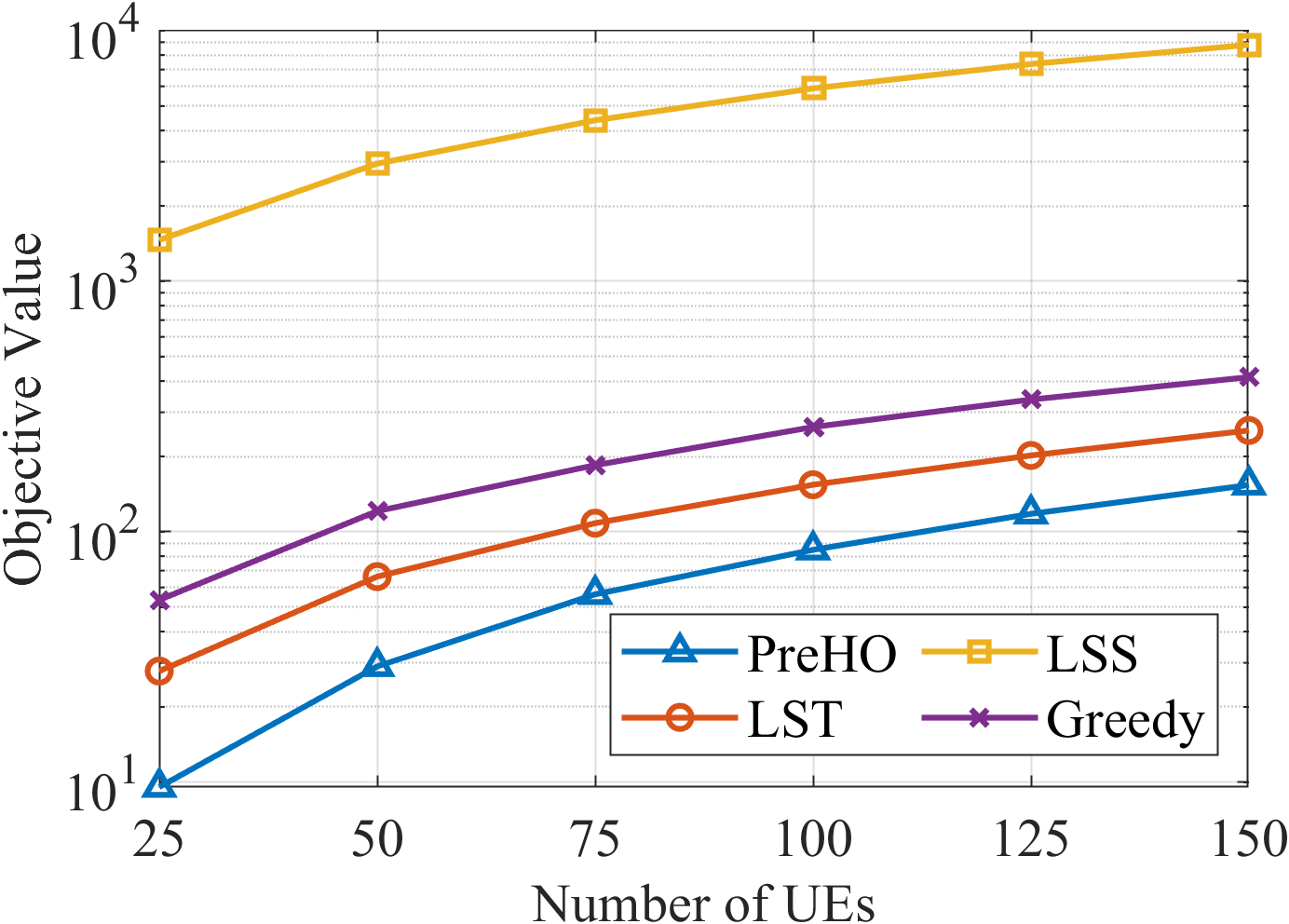}
  \caption{Performance under different numbers of UEs.}
  \label{fig:N}
\end{figure}

Recall that the proposed algorithm follows an iterative framework in which, at each iteration, the user association decision of a single UE is optimized.
Fig. \ref{fig:convergence} illustrates the convergence process of our algorithm under different numbers of UEs.
It can be observed that the improvement in the objective value becomes negligible once the number of iterations
reaches the number of UEs, i.e., each UE's handover decision has been optimized once.
Accordingly, in the subsequent experiments, the number of iterations is set equal to the number of UEs.

The comparison of all algorithms under different numbers of UEs is presented in Fig. \ref{fig:N}.
Notice that the y-axis is in log scale.
With more UEs, the total number of handovers increases and the average wireless resources allocated to each UE reduces,
hence the objective value increases.
The results indicate that PreHO consistently outperforms all benchmark methods in every scenario, 
while LST and Greedy demonstrate relatively lower performance. 
Notably, the LSS method, employed in conventional handover mechanisms, 
exhibits much worse performance compared to other approaches.
Specifically, PreHO achieves at least 57$\times$ improvement over LSS in all scenarios.

\begin{figure}[t]
  \centering
  \includegraphics[width=0.35\textwidth]{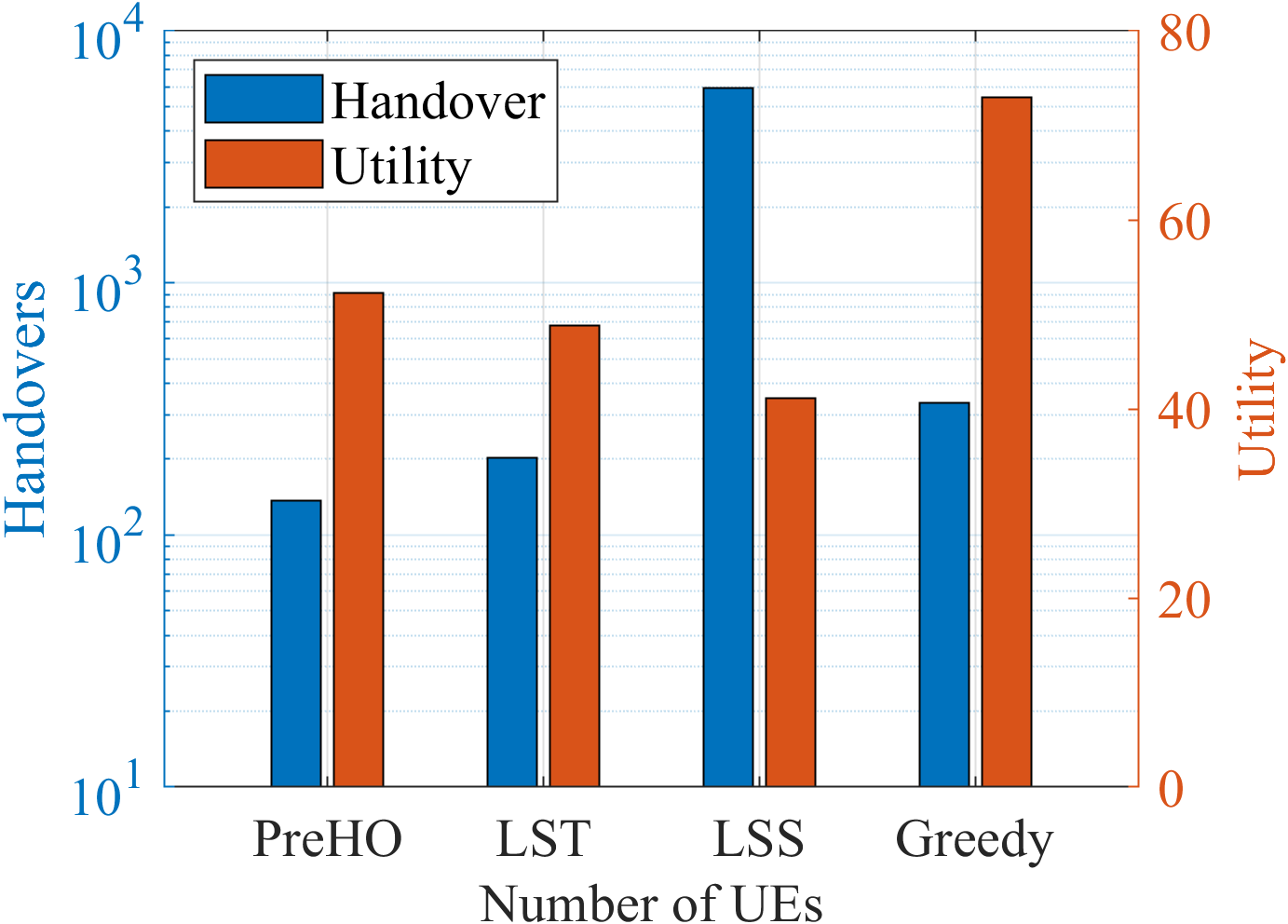}
  \caption{Number of handovers and UE utilities.}
  \label{fig:decomposition}
\end{figure}

\begin{figure}[t]
  \centering
  \includegraphics[width=0.35\textwidth]{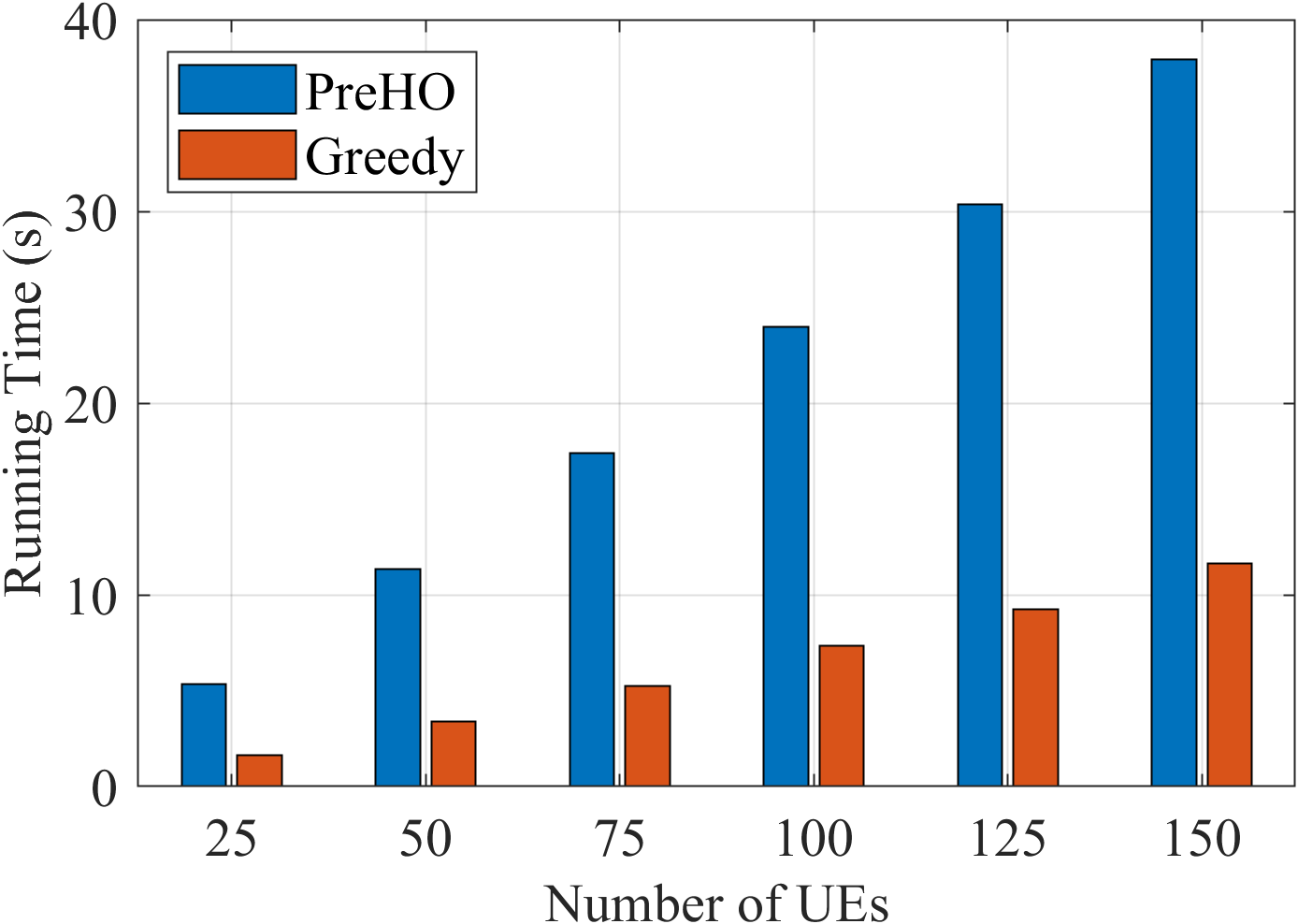}
  \caption{Running time under different numbers of UEs.}
  \label{fig:time}
\end{figure}

To further dissect the performance differences, Fig.~\ref{fig:decomposition} decomposes the objective into its two constituent components, 
namely, the total number of handovers and the aggregate utility weighted by $\gamma$. 
Note that the y-axis corresponding to the number of handovers is plotted on a logarithmic scale. 
Owing to the large number of S-gNBs covering the same geographical area and the randomness in channel gains, 
UEs adopting the LSS strategy tend to frequently switch their serving S-gNBs, giving rise to the well-known ping-pong effect. 
Consequently, the LSS scheme incurs a substantially higher number of handovers than the other three schemes, 
which constitutes the primary contributor to its elevated objective value. 
At the same time, LSS yields the lowest user utility, mainly due to excessive aggregation of UEs on high-SINR S-gNBs, resulting in severe load imbalance.

The Greedy scheme optimizes the objective value on a per-slot basis and therefore places greater emphasis on instantaneous user utility, 
achieving significantly higher utility than the other schemes. 
However, this myopic strategy also leads to a markedly larger number of handovers compared with PreHO and LST, 
which ultimately degrades the overall objective performance. 
By contrast, the LST scheme substantially reduces the number of handovers, albeit at the expense of reduced user utility. 
It is worth noting that the number of handovers under LST is slightly higher than that of PreHO. 
This is because LST progressively selects the S-gNB with the longest remaining service time starting from a randomly chosen initial association, 
whereas PreHO additionally optimizes the initial association decision. As a result, PreHO achieves a marginally lower number of handovers.

The simulation results also permit a comparative analysis of signaling overhead between PreHO and BHO at the CN. 
PreHO requires only a single signaling message per UE during each planning interval. 
In contrast, the LSS-based BHO approach results in 5,927 handovers within a single planning period, 
equating to approximately one handover every 10 seconds per UE. 
Although the evaluation of PreHO is conducted under an idealized setting in which all predictive information is assumed to be accurate,
the results nonetheless underscore its substantial potential in reducing signaling overhead.

Fig. \ref{fig:time} displays the running time of PreHO and Greedy.
The running time exhibits an approximately linear relationship with the number of UEs, remaining under 40 seconds for up to 150 UEs. 
This running time can be significantly reduced if we shorten the duration of planning intervals.
In addition, the PreHO is implemented in Python, which can be up to 100 times slower than lower-level languages such as C or C++ \cite{benchmark}.
Therefore, the running efficiency of PreHO can be considerably improved when deployed in practical systems.

\section{Conclusion} \label{section:conclusion}
To mitigate the frequent handovers and high signaling latency in LSNs,
this paper proposes PreHO, a predictive handover mechanism that transforms the traditional reactive handover paradigm into 
a proactive pre-planning mode by leveraging the predictable mobility patterns and stable channel states unique to LSNs. 
By integrating an MR-less and RACH-less design with time-based CHO procedures, 
the proposed mechanism significantly minimizes signaling overhead and handover interruption time.
Furthermore, the introduced HPF utilizes an efficient handover planning algorithm 
to achieve optimal balance between user utility and handover frequency.
Extensive evaluations demonstrate that PreHO substantially reduces handover latency and improves user experience,
offering a highly efficient solution for mobility management in next-generation large-scale LSNs.

\bibliographystyle{IEEEtran}
\bibliography{ref}

\end{document}